\documentclass{article}
\usepackage{amsmath}

\usepackage{natbib}
\usepackage{amsthm}
\theoremstyle{plain} 
\newtheorem{theorem}{Theorem}[section]      
\newtheorem{proposition}{Proposition}[section] 
\newtheorem{assumption}{Assumption}[section]
\usepackage{graphicx}

\usepackage{booktabs}
\usepackage{tabularx}
\usepackage{newtxtext}
\usepackage[subscriptcorrection]{newtxmath}

\newcommand{\dd}{\mathop{}\!\mathrm{d}} 

\usepackage[plain,noend]{algorithm2e}

\makeatletter
\renewcommand{\algocf@captiontext}[2]{#1\algocf@typo. \AlCapFnt{}#2} 
\def\@algocf@capt@plain{top}
\renewcommand{\algocf@makecaption}[2]{%
  \addtolength{\hsize}{\algomargin}%
  \sbox\@tempboxa{\algocf@captiontext{#1}{#2}}%
  \ifdim\wd\@tempboxa >\hsize
    \hskip .5\algomargin%
    \parbox[t]{\hsize}{\algocf@captiontext{#1}{#2}}
  \else%
    \global\@minipagefalse%
    \hbox to\hsize{\box\@tempboxa}
  \fi%
  \addtolength{\hsize}{-\algomargin}%
}
\makeatother

\begin{document}



\markboth{S. I WATSON and J. FRASER-GOVIL}{Causal policy effects for spatial exposure fields}

\title{Identification of causal policy effects for spatial exposure fields}

\author{S. I. WATSON and J. FRASER-GOVIL}

\maketitle

\begin{abstract}
Many interventions act on an entire spatial exposure field. We study when their induced laws satisfy policy positivity and what its failure implies for causal evaluation through weighting and structural modelling. We show that, for post-transformation fields that are Gaussian, the only admissible translation that satisfy positivity lie on the Cameron-Martin space of the observed exposure field. Under fine-scale variation and regularity assumptions, these are the only admissible non-decreasing pointwise policies on that scale. Common rules, including binding caps and proportional reductions on the Gaussian scale can therefore violate positivity. The Cameron-Martin norm controls both weight variability and the amplification of disagreement between candidate outcome models into disagreement between policy estimates. Without positivity, bounded continuous response models can make arbitrarily similar observational predictions while their policy estimates remain separated. We also give a policy-specific criterion for structural identification conditional on a realised exposure state. Finally, when a policy violating positivity admits weights in every finite representation, their second moments diverge under nested refinement that recovers the full field. These results motivate assessing identification and sensitivity for the specified policy effect across plausible structural models and under refinement of finite representations.
\end{abstract}


\section{Introduction}
\citet{DefraTargets2023} set two legally binding targets for PM2.5 pollution by 2040: an annual mean concentration of $10\,\mu\mathrm{g\,m^{-3}}$, and a $35\%$ reduction in population exposure against a 2016-18 baseline. The two are presented as complementary, the first acting where concentrations are highest and the second on exposure everywhere. Evaluating the health effect of either means comparing outcomes under the pollution surface the target would produce with outcomes under the surface observed. We show that the two questions are not equally answerable. The effect of the proportional reduction can be identified by reweighting the observed exposure process. The effect of the cap can though can only be evaluated through restrictions on a response model. However, for caps and similar policies, different response models can make arbitrarily similar observational predictions while their policy estimates remain separated
 
Many interventions of policy interest act in this way on an entire surface or spatial configuration rather than on a scalar treatment, such as pollution abatement, climate effects, or the siting of services. Representing the exposure as a field, causal evaluation requires comparing probability laws on an infinite-dimensional treatment space. Effects of stochastic policies are identified by reweighting the observational treatment distribution when the target law $\pi(\cdot\mid C)$ is absolutely continuous with respect to the observational law $g_{0}(\cdot\mid C)$, so that a Radon-Nikodym policy weight exists. This is the measure-theoretic form of the requirement that treatment levels relevant to the estimand occur within covariate strata, and it underlies stochastic and modified treatment policies for continuous exposures \citep{Westreich2010,Petersen2012,Munoz2012,Haneuse2013}. Such policies are attractive precisely because they can be chosen to stay close to the observed treatment distribution, and so to alleviate conventional positivity problems.
 
For an intervention that can be described as a field absolute continuity is far more restrictive than finite-dimensional overlap suggests. Two laws on an 
infinite-dimensional space can be mutually singular even when every finite-dimensional projection has an ordinary density. The distinction is classical for Gaussian measures \citep{Cameron1944,Feldman1958}, and has been used to show the restrictiveness of overlap when the \emph{covariates} are functional \citep{Ghosh2019}. Where the exposure itself is functional, a growing literature works through finite-dimensional representations \citep{Zhang2021,Tan2025,Wang2024,Ciardulli2026,Kennedy2019}, obtaining weighting representations under uniform moment bounds on the density ratios \citep{Jiang2026}, or constructing stochastic policies designed to avoid restrictive positivity requirements \citep{Barnard2026}. This leaves open a question that precedes estimation: when does a scientifically specified transformation of a continuum treatment induce a law absolutely continuous with respect to the observational law?

For fields that are Gaussian after a prespecified transformation, we characterise admissible translations through the Cameron-Martin space. Under conditions on the field's fine-scale variation and the regularity of the policy, these are the only admissible non-decreasing pointwise transformations on that scale. The scale also plays an important role as translation on the logarithmic scale acts multiplicatively on the original exposure surface. The Cameron-Martin norm also connects positivity to structural modelling. It bounds how much two outcome-response models can disagree about a policy effect when their predictions agree under the observational exposure law. A large norm permits greater amplification of this disagreement, even when the policy is admissible. Without positivity, uniform control does not hold and bounded continuous response models can make arbitrarily similar observational predictions while implying separated policy effects. 

Failure of positivity nevertheless permits identification under suitable structural restrictions. We give a policy-specific criterion conditional on a realised exposure state, clarifying which contrasts a stated response model identifies. We also examine exposure reconstruction where applying a cap to a conditional mean surface can understate the expected exposure removed. Finally, finite representations can admit policy weights even when the continuum policy has none. If weights exist at every stage of a nested refinement recovering the full field, their second moments must diverge. The same divergence describes the loss of uniform control over disagreement between model-based policy estimates. 

\section{Setup}
\subsection{Statistical Model and Potential Outcomes}
Let $\mathcal D\subset\mathbb R^{d}$ be a contiguous spatial or spatio-temporal domain containing $N$ outcome units at locations $s_{1},\ldots,s_{N}\in\mathcal D$, and let $D$ denote the largest scientifically relevant separation between an outcome unit and a point of $\mathcal D$. We observe baseline information $C$, which contains measured prognostic and spatial-assignment variables as well as fixed geographic information. 

Let $\rho$ be a prespecified reference measure on $\mathcal D$, such as area, network length or population support, which fixes the units in which spatial exposure is expressed. Let $\mathscr A$ be a set of finite non-negative measures on $\mathcal D$, equipped with the $\sigma$-algebra generated by the evaluations $a\mapsto a(B)$ for Borel $B\subseteq\mathcal D$. A state $a\in\mathscr A$ records how much intervention is present where. We assume that $\mathcal D$ is Borel, that $\rho$ is $\sigma$-finite, and that $\mathscr A$, equipped with its evaluation $\sigma$-algebra, and the state space of $C$ are standard Borel. All transformations depending on $C$ are jointly measurable.

A \emph{spatial exposure field}, such as a pollutant surface, is a state with $a\ll\rho$. We write $\dot a=\dd a/\dd\rho$ for its concentration surface, so that $a(B)=\int_{B}\dot a\,\dd\rho$.  For every observational or target treatment-state law $q(\cdot\mid C)$ considered below, assume $\int_{\mathscr A} a(\mathcal D)\,q(\dd a\mid C)<\infty$ almost surely. All spatial kernels appearing below are bounded and measurable.

Let $d(s,t)\in[0,D]$ be a prespecified spatial, travel-time, network or other scientifically relevant distance, and let $\chi_i(t)\in\{0,1\}$ indicate which exposure locations are included in the spatial exposure summary for unit $i$. For a bounded measurable kernel $\phi$, define
\begin{equation}
    S_i[\phi](a)
    =\int_{\mathcal D}\chi_i(t)\,\phi\{d(s_i,t)\}\,a(\dd t),
    \label{eq:spatial_exposure_mapping}
\end{equation}
and write
$S[\phi](a)=\{S_1[\phi](a),\ldots,S_N[\phi](a)\}^{\top}$. For a field, $a(\dd t)=\dot a(t)\rho(\dd t)$, so this is a spatially weighted integral of concentration.

We also define local exposure by
\begin{equation}
    x_i(a)=\frac{a(\mathcal W_i)}{\rho(\mathcal W_i)},
    \label{eq:local_exposure}
\end{equation}
where $\mathcal W_i$ is a prespecified local window with $0<\rho(\mathcal W_i)<\infty$. This gives a local concentration average for fields, and a binary treatment indicator when $a(\mathcal W_i)\in\{0,1\}$ and $\rho(\mathcal W_i)=1$. Window averages are unchanged by altering a density on a set of $\rho$-measure zero, so they remain well defined when the density is specified only almost everywhere. The local and spatial exposure summaries may overlap. If the spatial term is intended to include only exposure outside the local window, take $\chi_i(t)=0$ on $\mathcal W_i$. Otherwise, changing exposure within the window can change both summaries. Setting $x_i\equiv0$ omits the separate local term.

We treat the distance metric and eligibility mapping as prespecified parts of the causal question. This differs from work that relaxes the requirement that an exposure mapping capture the full interference structure \citep{Savje2024,Leung2022}; uncertainty or misspecification of the mapping is not addressed by the identification results below, although \S\ref{sec:model_free_displacement} bears on the comparison of candidate mappings directly.

Let $Y_{i}(x,a)$ denote the potential outcome under joint intervention on unit $i$'s own treatment status $x$ and the spatial treatment state $a$. A state induces the own status $x_{i}(a)$ and hence the state-specific potential outcome $Y_{i}(a)=Y_{i}\{x_{i}(a),a\}$. For each $i$, assume that there exist a random element $U_i$ in a standard Borel space and a jointly measurable function $F_i$ such that, outside a single null set, $Y_i(a)=F_i(a,C,U_i)$ for every $a\in\mathscr A$. Writing $Q_i(\cdot\mid C)$ for a regular conditional law of
$U_i$ given $C$, conditional potential-outcome means and the response model below use the versions
\begin{equation*}
    \mu_i(a,C) =\int F_i(a,C,u)\,Q_i(\dd u\mid C)
\end{equation*}
Let $g$ be a known link function. We assume the linear spatial response model
\begin{equation}
    \mathbb E\{Y_{i}(a)\mid C\}
    =\eta_{0i}(C)+\tau x_{i}(a)+S_{i}[\phi](a),
    \label{eq:potential_outcome_model_point_process}
\end{equation}
where $\eta_{0i}(C)$ is the baseline linear predictor, $\tau$ is the own-treatment effect and $\phi$ is the true spatial spillover kernel. 

Estimation replaces $\eta_{0i}(C)$ by a baseline design $H_{i}(C)$ with coefficient $\gamma$, and restricts $\phi$ to a prespecified class $\mathcal K$ of kernels on $[0,D]$, giving the working linear predictor
\begin{equation}
    \mu_{i}(a,C)
    =H_{i}(C)^{\top}\gamma+\tau x_{i}(a)+S_{i}[\phi](a),
    \qquad \phi\in\mathcal K .
    \label{eq:working_predictor}
\end{equation}

\subsection{Target policies and the estimand}
\label{sec:estimand}
Estimands are defined conditionally on the realised baseline and geographic information $C$ on the finite domain $\mathcal D$, and are therefore $C$-measurable random variables. First, we write $A\mid C\sim g_{0}(\dd a\mid C)$ for the observational law of the realised spatial treatment state, i.e. the mechanism by which the intervention actually arose in the observed system. A target spatial policy is a probability kernel $\pi(\cdot\mid C)$ on $\mathscr A$. The estimand is then, in effect, a difference in expected potential outcomes between two policies. 

A policy may be specified directly, or as the pushforward of a reference law. Throughout, we take this reference law to be $g_0(\cdot\mid C)$ although it may be any policy on $\mathscr A$. A pushforward of this law is one that, under a measurable transformation $\Phi(\cdot,C):\mathscr A\to\mathscr A$, draws a state from $g_0$ and applies $\Phi$ to it, written $ \pi=\Phi(\cdot,C)_{\#}g_0$ that is $\pi(\mathcal L\mid C) =g_0\{a:\Phi(a,C)\in\mathcal L\}$ for measurable $\mathcal L\subseteq\mathscr A$. Most policies of scientific or regulatory interest are \emph{pointwise}: what happens at a location depends only on the concentration there. Such a policy acts through a rule $\varphi$ on the concentration surface,
\begin{equation}
    \dot a \;\longmapsto\; \varphi\{\cdot\,,\dot a(\cdot)\},
    \label{eq:pointwise_policy}
\end{equation}
with proportional reduction, capping and targeted remediation the cases $\varphi(t,u)=ru$, $\min\{u,c(t)\}$ and $\{u-\delta(t)\}_{+}$.

The mean outcome, $\Psi(\pi)$, and mean exposure, $\mu_{\pi}(B\mid C)$  under policy $\pi$ are:
\begin{equation}
    \Psi(\pi)=\frac{1}{N}\sum_{i=1}^{N}
    \int_{\mathscr A}\mu_{i}(a,C)\,\pi(\dd a\mid C), \qquad \mu_{\pi}(B\mid C)=\int_{\mathscr A}a(B)\,\pi(\dd a\mid C).
    \label{eq:policy_mean}
\end{equation} 
For a target policy and our reference law writing $\nu(\dd a\mid C)=\pi(\dd a\mid C)-g_{0}(\dd a\mid C)$, the estimand is the contrast $\Theta_{\nu}=\Psi(\pi)-\Psi(g_0)$. Two summaries of $\nu$ carry the policy information. The induced direct-treatment contrast is
\begin{equation}
    c_{\nu}=\frac{1}{N}\sum_{i=1}^{N}\int_{\mathscr A}x_{i}(a)\,
    \nu(\dd a\mid C),
    \label{eq:policy_direct_contrast}
\end{equation}
For the spatial component, the policy matters through how much exposure mass it adds or removes at each distance from outcome units. Writing $\mu_{\nu}=\mu_{\pi}-\mu_{g_{0}}$ for the signed policy-induced exposure measure, the policy exposure profile by distance is
\begin{equation}
    w_{\nu}(B)=\frac{1}{N}\sum_{i=1}^{N}\int_{\mathcal D}
    \chi_{i}(t)\mathbf 1\{d(s_{i},t)\in B\}\,\mu_{\nu}(\dd t\mid C),
    \qquad B\subseteq[0,D]\ \text{Borel}.
    \label{eq:policy_exposure_distance_measure}
\end{equation}
Because $\nu$ has total mass zero the baseline predictor cancels, and under the identity link $\Theta_{\nu}=c_{\nu}\tau+\langle\phi,w_{\nu}\rangle$. Thus the policy enters the structural estimand only through its direct-treatment contrast and its exposure-by-distance signature.

\subsection{Identification conditions}
\label{sec:identification_conditions}
 
Write $\overline Y=N^{-1}\sum_{i=1}^{N}Y_{i}$ for the domain mean outcome. For identification of $\Theta_\nu$ by weighting, we assume:
\begin{itemize}
    \item \textbf{Consistency.} $Y_{i}=Y_{i}(A)$.
    \item \textbf{Conditional exchangeability.}
    $U_i\perp\!\!\!\perp A\mid C$ for every $i$.
    \item \textbf{Policy positivity.}
    $\pi(\cdot\mid C)\ll g_{0}(\cdot\mid C)$ for almost every $C$.
\end{itemize}
 
Conditional exchangeability is the assumption that adjustment for $C$ removes confounding of intervention placement, which is the spatial confounding problem addressed elsewhere \citep{Hodges2010, Reich2006}. For an exposure field, consistency additionally requires that $a$ capture every outcome-relevant aspect of the intervention, so that two interventions producing the same field produce the same potential outcome, which is the spatial form of the requirement that treatment versions be causally irrelevant or explicitly represented. Policy positivity is the most demanding of the three and the one we explore below. 

\subsection{Two routes to identification}
\label{sec:two_routes}
We consider two routes to identification: nonparametrically by inverse probability weighting, and structurally through the linear outcome model. For the weighting route, under policy positivity the Radon-Nikodym derivative
\begin{equation}
    \omega(a,C)
    =\frac{\dd\pi(\cdot\mid C)}{\dd g_{0}(\cdot\mid C)}(a)
    \label{eq:policy_rn_weight}
\end{equation}
exists and is, for each fixed $C$, determined up to a $g_{0}(\cdot\mid C)$-null set. We choose a version $\omega(a,C)$ jointly measurable in $(a,C)$. Under consistency, conditional exchangeability, policy positivity and the integrability condition $\mathbb E_{g_{0}}[\{1+\omega(A,C)\}\sum_{i=1}^{N}|Y_{i}|]<\infty$, a change of measure gives
\begin{equation}
    \Psi(\pi)=\mathbb E_{g_{0}}\{\omega(A,C)\,\overline Y\mid C\},
    \qquad
    \Theta_{\nu}=\mathbb E_{g_{0}}[\{\omega(A,C)-1\}\,\overline Y\mid C].
    \label{eq:policy_weighting_identity}
\end{equation}
Proofs of \eqref{eq:policy_weighting_identity} and other formal statements below are given in the Appendix.

The identity requires no outcome model, so wherever policy positivity holds, $\Theta_{\nu}$ is a nonparametrically identified causal contrast rather than an artefact of the additivity in \eqref{eq:potential_outcome_model_point_process}. Related weighting constructions have been developed for stochastic interventions on repeatedly observed spatial point patterns \citep{Papadogeorgou2022} and, more recently, for function-valued treatments through finite-dimensional score representations \citep{Zhang2021,Jiang2026,Ciardulli2026}. Our concern is instead whether a policy specified directly as a transformation of the continuum treatment state induces a law absolutely continuous with respect to the observational law in the first place.

Now suppose positivity fails, so that no policy weight exists. The contrast may still be recoverable if we are prepared to restrict the response model using the fact that under an identity link everything is linear. The estimand $\Theta_{\nu}=c_{\nu}\tau+\langle\phi,w_{\nu}\rangle$ is a linear functional of the unknown $(\tau,\phi)$. A single realisation delivers the $N$ conditional means $\mathbb E(Y_{i}\mid A,C)$, which are $N$ further linear functionals of the same unknowns. Identification is then a question of whether the estimand's functional lie in the span of the observable ones. 
 
Let $\sigma_i(a)$ record the exposure included by $\chi_i$, grouped by distance from unit $i$. Specifically, for a Borel
set $B\subseteq[0,D]$, define
\begin{equation*}
    \sigma_i(a)(B)
    =
    \int_{\mathcal D}
        \chi_i(t)\,\mathbf 1\{d(s_i,t)\in B\}\,a(\dd t).
\end{equation*}
Thus $\sigma_i(a)([r_1,r_2])$ is the included exposure mass between distances $r_1$ and $r_2$ from unit $i$, and $S_i[\phi](a)=\langle\phi,\sigma_i(a)\rangle$. Then, in addition to consistency and conditional exchangeability, assume:
\begin{itemize}
    \item[(S1)] identity link, with $\phi\in\mathcal K$ a linear subspace of     bounded measurable functions on $[0,D]$;
    \item[(S2)] $\eta_{0i}(C)=H_{i}(C)^{\top}\gamma+u_{i}$, where     $H_{i}(C)\in\mathbb R^{q}$ is a known vector of baseline covariates with unknown coefficient $\gamma$, and $u\in\mathcal A$ for a known linear subspace $\mathcal A\subseteq\mathbb R^{N}$. Writing $H$ for the matrix with rows $H_{i}(C)^{\top}$, set $\mathcal B=\operatorname{col}(H)+\mathcal A$;
    \item[(S3)] $\theta=(\gamma,u,\tau,\phi)$ unrestricted in  $\Xi=\mathbb R^{q}\times\mathcal A\times\mathbb R\times\mathcal K$;
    \item[(S4)] a single realisation: the observational law determines     $\{\mathbb E(Y_{i}\mid A,C)\}_{i=1}^{N}$ and nothing further about $\theta$.
\end{itemize}
The subspace $\mathcal A$ encodes the latent spatial structure and $\mathcal B$ is the resulting nuisance space. The richer $\mathcal A$, the more exposure-response variation is indistinguishable from latent spatial variation. Call $\Theta_{\nu}$ identified if it takes a common value at every $\theta\in\Xi$ consistent with the observational law.
 
\begin{proposition}[Structural identification]
\label{prop:structural_identification}
Under \textup{(S1)--(S4)}, with $\pi$ frozen, $\Theta_{\nu}$ is identified if and only if there exists $\xi\in\mathbb R^{N}$ with
\textup{(i)} $\xi\perp\mathcal B$;
\textup{(ii)} $\sum_{i}\xi_{i}x_{i}(A)=c_{\nu}$; and
\textup{(iii)} $\bigl\langle f,\;\sum_{i}\xi_{i}\sigma_{i}(A)
-w_{\nu}\bigr\rangle=0$ for every $f\in\mathcal K$. In that case $\Theta_{\nu}=\sum_{i}\xi_{i}\,\mathbb E(Y_{i}\mid A,C)$, independently of which solution $\xi$ is taken.
\end{proposition}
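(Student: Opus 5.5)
The plan is to recast the proposition as a statement in linear algebra: a linear functional is constant on an affine set of parameters exactly when it factors through the linear observation map. Fix $C$ and the realised $A$, with $\pi$ frozen. Then $x_i(A)$, $\sigma_i(A)$, $c_\nu$ and $w_\nu$ are fixed objects, and $\langle f,\sigma_i(A)\rangle$ and $\langle f,w_\nu\rangle$ are finite for bounded measurable $f$, because the exposure masses are finite.

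Define the linear observation map $L:\Xi\to\mathbb R^N$ by
\[
L(\theta)_i=H_i(C)^\top\gamma+u_i+\tau x_i(A)+\langle\phi,\sigma_i(A)\rangle .
\]
Under consistency, exchangeability and (S1)--(S2), $\mathbb E(Y_i\mid A,C)=L(\theta)_i$. Define the estimand functional
\[
T(\theta)=c_\nu\tau+\langle\phi,w_\nu\rangle ,
\]
using the cancellation of the baseline predictor noted after \eqref{eq:policy_exposure_distance_measure}. By (S3)--(S4), the parameters consistent with the observational law form $\theta_0+\ker L$. So $\Theta_\nu$ is identified if and only if $T$ vanishes on $\ker L$.

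The key step is the standard fact that, because $L$ has finite-dimensional range, $T|_{\ker L}=0$ holds if and only if $T=\xi^\top L$ for some $\xi\in\mathbb R^N$. Sufficiency is immediate. For necessity, $T$ vanishing on $\ker L$ means $T$ descends to a linear functional on $\operatorname{im}L\subseteq\mathbb R^N$. Any linear functional on a subspace of $\mathbb R^N$ extends to one on all of $\mathbb R^N$, and such a functional has the form $v\mapsto\xi^\top v$. No topology on $\mathcal K$ is needed, and this avoids any closure issues in the infinite-dimensional kernel class.

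Next I expand $T=\xi^\top L$ by comparing coefficients of each free block of $\theta$, which is unrestricted by (S3):
\begin{itemize}
\item The $\gamma$ and $u$ blocks, ranging over $\mathbb R^q$ and $\mathcal A$, give $H^\top\xi=0$ and $\xi\perp\mathcal A$. Together these say $\xi\perp\mathcal B$, which is (i).
\item The $\tau$ block gives $\sum_i\xi_ix_i(A)=c_\nu$, which is (ii).
\item The $\phi$ block gives $\langle f,\sum_i\xi_i\sigma_i(A)-w_\nu\rangle=0$ for all $f\in\mathcal K$, which is (iii).
\end{itemize}
Conversely, (i)--(iii) reassemble into $T(\theta)=\xi^\top L(\theta)$ for every $\theta$.

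For the final claim, $\Theta_\nu=T(\theta_0)=\xi^\top L(\theta_0)=\sum_i\xi_i\mathbb E(Y_i\mid A,C)$. Independence from the choice of $\xi$ follows because any two solutions $\xi,\xi'$ satisfy $(\xi-\xi')^\top L=0$. Hence $(\xi-\xi')^\top L(\theta_0)=0$, and this vanishing also confirms the value is well defined.

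The step I expect to need the most care is the bookkeeping that $T$ really is the estimand on all of $\Xi$. This requires $\sum_i\int_{\mathscr A}\eta_{0i}\,\nu(\dd a\mid C)=0$ and the interchange $\int\langle\phi,\sigma_i(a)\rangle\,\nu(\dd a)=\langle\phi,\int\sigma_i(a)\,\nu(\dd a)\rangle$. The interchange follows by Fubini from the boundedness of $\phi$ and the finite mean-mass assumption on $\pi$ and $g_0$. The linear-algebra core itself is routine.
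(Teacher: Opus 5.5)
Your proposal is correct and is essentially the paper's proof up to notation (the paper writes $T$ for your observation map $L$ and $\Lambda$ for your estimand functional). Both arguments reduce identification to the functional vanishing on the kernel of the observation map, factor it through the finite-dimensional range to obtain $\xi$, read off (i)--(iii) by varying each parameter block, and get independence of $\xi$ from $(\xi-\xi')^{\top}L=0$; your explicit extension from $\operatorname{im}L$ to $\mathbb R^{N}$ and the Fubini justification of $\Theta_\nu=c_\nu\tau+\langle\phi,w_\nu\rangle$ are details the paper leaves implicit, and you handle them correctly.
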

 
The vector $\xi$ is a contrast on the observed unit means, and the three conditions say that some weighting of units must avoid the nuisance space, reproduce the policy's direct-treatment contrast, and reproduce its exposure profile by distance. If $\mathcal K$ is all bounded measurable functions on $[0,D]$, taking indicators in \textup{(iii)} forces the equality of measures $w_{\nu}=\sum_{i}\xi_{i}\sigma_{i}(A)$, so identified policy signatures lie in a space of dimension at most $N-\dim\mathcal B$ and no policy with $w_{\nu}\notin\operatorname{span}\{\sigma_{i}(A)\}$ is identified, however finely the field is measured.
 
Restricting $\mathcal K$ removes this obstruction, at the cost of quantifiable kernel misspecification. With $\mathcal K$ finite-dimensional, fix a basis $\psi_{1},\ldots,\psi_{J}$ of bounded measurable functions on $[0,D]$, so that $\phi=\sum_{r}\phi_{r}\psi_{r}$, and define
\begin{equation}
    X_{A}
    =
    \bigl[
        x(A),\,
        S[\psi_{1}](A),\ldots,S[\psi_{J}](A)
    \bigr],
    \qquad
    q_{\nu}
    =
    \bigl(
        c_{\nu},\,
        \langle\psi_{1},w_{\nu}\rangle,\,
        \ldots,\,
        \langle\psi_{J},w_{\nu}\rangle
    \bigr)^{\top},
    \label{eq:policy_estimability_vectors}
\end{equation}
Under \textup{(S1)} the estimand factorises across this pair. Writing $\theta_{\mathcal K}=(\tau,\phi_{1},\ldots,\phi_{J})^{\top}$ for the coefficients of $\phi$ in the basis of $\mathcal K$, 
\begin{equation}
    \Theta_{\nu}=q_{\nu}^{\top}\theta_{\mathcal K},
    \label{eq:estimand_factorisation}
\end{equation}
so that $q_{\nu}$ encodes the policy-related content of the estimand and $\theta_{\mathcal K}$ the response. Identification is then the question of whether this pairing is determined by the observational law. Set $R_{A}=P_{\mathcal B^{\perp}}X_{A}$, with $P_{\mathcal B^{\perp}}$ the projection onto the orthogonal complement of $\mathcal B$.  Proposition~\ref{prop:structural_identification} is then the linear-model estimability criterion $q_{\nu}\in\operatorname{row}(R_{A})$, which holds for every $q_{\nu}$ when $R_{A}$ has full column rank, requiring $N\geq\dim\mathcal B+1+J$.The kernel coefficients need not therefore all be identified and a policy effect remains identified whenever it is insensitive to the unidentified directions.
 
A worked instance is the cluster randomised trial with discrete sources, in which $\mathcal B=\operatorname{span}\{\mathbf 1\}$, $x_{i}(A)$ is own-cluster treatment, and every source is eligible for every unit bar $p$ excluded own-cluster sources, so that $\sigma_{i}(A)([0,D])=M_{0}-p\,x_{i}(A)$ for a fixed source count $M_{0}$. Whenever $\mathbf 1\in\mathcal K$, conditions \textup{(i)} and \textup{(ii)} reduce \textup{(iii)} at $f=\mathbf 1$ to $\langle\mathbf 1,w_{\nu}\rangle+p\,c_{\nu}=0$, so no contrast violating this equality is identified at any allocation or sample size. The scalar on the left is the level leverage of \citet{Watson2026}, where the resulting rank shortfall is resolved by a design or assumption on sources of counterfactual information.
 
The criterion is a property of the realised treatment state as much as of the model, since whether $q_{\nu}\in\operatorname{row}(R_{A})$ depends on $A$: a contrast may be identified under one realisation of the placement process and not under another. It is also where flexible spatial and Bayesian response models require care. If $\mathcal A$ is rich, exposure-response directions may be observationally indistinguishable from changes in the latent field, and in the limiting case $\mathcal B=\mathbb R^{N}$ we have $R_{A}=0$ and no non-zero contrast is identified from the conditional mean alone. A proper prior or smoothing penalty will still deliver an estimate or a posterior, but it resolves directions the observational mean structure does not distinguish rather than creating identification from the data. The same holds approximately when $R_{A}$ has full column rank but small singular values, where contrasts loading on those directions are formally identified, weakly supported by the realised exposure geometry, and correspondingly sensitive to regularisation. 

\section{Cameron-Martin Policies}
\label{sec:admissibility}
\subsection{Setting}
\label{sec:field_setting}
We assume that $g_{0}(\cdot\mid C)$ places mass one on states with $a\ll\rho$ and write $\dot A=\dd A/\dd\rho$ for the random exposure surface of the realised state $A$ of \S\ref{sec:estimand}, the capitalised counterpart of the $\dot a$. An exposure surface is determined only $\rho$-almost everywhere. We assume $\rho(\mathcal D)<\infty$ and $\dot A\in L_2(\rho)$ almost surely, and take $\mathbb X=L_2(\rho)$ with its Borel $\sigma$-algebra. Write $P_{0}$ for the law of $\dot A$ under $g_{0}(\cdot\mid C)$. A policy acting through a measurable $\Phi:\mathbb X\to\mathbb X$ induces $\pi=\Phi_{\#}P_{0}$, and policy positivity is the requirement $\Phi_{\#}P_{0}\ll P_{0}$. We fix a jointly measurable version of the field with continuous sample paths, so that the pointwise marginals of $\dot A$ are well defined, and assume $\rho$ and Lebesgue measure on $\mathcal D$ are mutually absolutely continuous with density bounded above and below on compacta, so that the null sets below are the same for both.
 
The results are stated for a field that is Gaussian on a prespecified scale. We assume a \emph{Gaussianising transformation}: a Borel set $\mathbb X_{0}\subseteq\mathbb X$ with $P_{0}(\mathbb X_{0})=1$ and a bimeasurable bijection $\mathcal T:\mathbb X_{0}\to\mathcal T(\mathbb X_{0})$ onto a Borel subset of $\mathbb X$, such that $Z=\mathcal T(\dot A)$ has law $N(\mu,\Sigma)$ with $\Sigma$ trace class. Policies are required to map $\mathbb X_{0}$ into itself. Two cases are of most interest. First, $\mathcal T=\mathrm {id}$, on which policies act additively, and second, $\mathcal T=\log$ with $\mathbb X_{0}=\{f\in\mathbb X:f>0\ \rho\text{-a.e.},\ \log f\in\mathbb X\}$, on which they act multiplicatively. Restriction to $\mathbb X_{0}$ is what makes the logarithmic case well defined, since $\log$ is not a bijection of $L_{2}(\rho)$. For the identity-scale Gaussian examples, we extend the state space to include finite signed measures $a(\dd t)=f(t)\rho(\dd t)$ with $f\in L_2(\rho)$. The preceding definitions and assumptions extend accordingly, with the exposure-moment condition understood as $\int_{\mathscr A}|a|(\mathcal D)\,q(\dd a\mid C)<\infty$ a.s. for every observational or target law $q$ considered, where $|a|$ denotes the total variation measure. The logarithmic examples retain nonnegative exposure states.
 
Two classical facts about Gaussian measures do all the work below \citep[Ch.~2]{Bogachev2007}. The first is the Cameron-Martin theorem. The \emph{Cameron--Martin space} of $N(\mu,\Sigma)$ is $\mathbb H_{\Sigma}=\Sigma^{1/2}(\mathbb X)$, with norm $\|h\|_{\mathbb H_{\Sigma}}=\inf\{\|w\|_{\mathbb X}:\Sigma^{1/2}w=h\}$. Translation by $h$ gives an equivalent law when $h\in\mathbb H_{\Sigma}$ and a mutually singular one otherwise. When $\Sigma$ has infinite rank, $Z-\mu\notin\mathbb H_{\Sigma}$ almost surely, so admissible displacements are smoother than the sample paths themselves. Each $h\in\mathbb H_{\Sigma}$ determines a linear summary of the field, the Paley-Wiener functional $\widehat h$, with
\begin{equation}
    \widehat h(Z-\mu)\sim N\bigl(0,\|h\|^{2}_{\mathbb H_{\Sigma}}\bigr),
    \label{eq:paley_wiener_law}
\end{equation}
so that $\|h\|_{\mathbb H_{\Sigma}}$ is the number of observational standard deviations by which a displacement by $h$ moves the summary it defines. The second is the Feldman--H\'ajek dichotomy: two Gaussian laws on $\mathbb X$ are either equivalent or mutually singular, never absolutely continuous in one direction only. Positivity for a translation policy is therefore a yes-or-no question, and when the answer is no there is an event holding with probability one under the policy and zero under the observational law. All density identities below are understood almost surely under the relevant reference law.
 
\subsection{Which policies admit a weight?}
\label{sec:admissible}
 
\begin{theorem}[Admissible translations]
\label{thm:admissible_class}
Let $Z=\mathcal T(\dot A)\sim N(\mu,\Sigma)$ with $\Sigma$ trace class, let $h\in\mathbb X$ satisfy $\mathcal T(\mathbb X_{0})+h\subseteq\mathcal T(\mathbb X_{0})$, and let $\Phi_{h}=\mathcal T^{-1}\circ(\,\cdot+h)\circ\mathcal T$ with induced policy $\pi_{h}=\Phi_{h\#}P_{0}$. Then $\pi_{h}$ satisfies policy positivity if and only if $h\in\mathbb H_{\Sigma}$, in which case $\pi_{h}\sim P_{0}$ and the policy weight \eqref{eq:policy_rn_weight} is
\begin{equation}
    \omega_{h}(a)
    =\exp\Bigl\{\widehat h\{\mathcal T(\dot a)-\mu\}
      -\tfrac12\|h\|^{2}_{\mathbb H_{\Sigma}}\Bigr\},
    \qquad
    \mathbb E_{g_{0}}(\omega_{h}^{2}\mid C)
    =\exp\bigl\{\|h\|^{2}_{\mathbb H_{\Sigma}}\bigr\}.
    \label{eq:cm_policy_weight}
\end{equation}
Consequently, for every $\chi^2_{\max}\geq0$, $\mathbb E_{g_0}(\omega_h^2 \mid C) \le 1 + \chi^2_{\max}$ if and only if $\|h\|_{\mathbb H_\Sigma} \le \{\log(1+\chi^2_{\max})\}^{1/2}$.
\end{theorem}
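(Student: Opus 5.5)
The plan is to transport the question to the Gaussian scale, where Cameron--Martin and Feldman--H\'ajek apply directly. Write $P_0^{\mathcal T}=\mathcal T_{\#}P_0=N(\mu,\Sigma)$, a law carried by $\mathcal T(\mathbb X_0)$.

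\emph{Reduction to the Gaussian scale.} Because $\mathcal T$ is a bimeasurable bijection of $\mathbb X_0$ onto the Borel set $\mathcal T(\mathbb X_0)$, and $\Phi_h$ maps $\mathbb X_0$ into itself by the hypothesis $\mathcal T(\mathbb X_0)+h\subseteq\mathcal T(\mathbb X_0)$, we have
\[
\mathcal T_{\#}\pi_h=(\,\cdot+h)_{\#}P_0^{\mathcal T}=N(\mu+h,\Sigma).
\]
Both $\pi_h$ and $P_0$ are concentrated on $\mathbb X_0$. On measures carried by $\mathbb X_0$, pushforward by $\mathcal T$ is a bijection that preserves null sets in both directions. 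Hence $\pi_h\ll P_0$ if and only if $N(\mu+h,\Sigma)\ll N(\mu,\Sigma)$. Densities also transfer: $\dd\pi_h/\dd P_0(\dot a)=\{\dd N(\mu+h,\Sigma)/\dd N(\mu,\Sigma)\}\{\mathcal T(\dot a)\}$.

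\emph{The dichotomy.} By the Cameron--Martin theorem, $N(\mu+h,\Sigma)$ and $N(\mu,\Sigma)$ are equivalent when $h\in\mathbb H_\Sigma$ and mutually singular otherwise. Mutual singularity excludes $\pi_h\ll P_0$, since $\pi_h$ is a probability measure and so cannot vanish on a set of full $P_0$-complement measure. This gives the ``if and only if'' and the equivalence $\pi_h\sim P_0$. The Cameron--Martin density is $\exp\{\widehat h(z-\mu)-\tfrac12\|h\|^2_{\mathbb H_\Sigma}\}$. Composing with $\mathcal T$ gives the stated $\omega_h$, with $\widehat h$ the Paley--Wiener functional. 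This functional is defined only $N(\mu,\Sigma)$-a.e., which is acceptable because the weight is determined only up to $g_0$-null sets. Joint measurability in $C$ is inherited from the measurable dependence of $\mu$, $\Sigma$ and $h$ on $C$. I expect this to be routine but worth stating.

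\emph{Second moment.} By \eqref{eq:paley_wiener_law}, under $g_0$ the variable $\xi=\widehat h\{\mathcal T(\dot A)-\mu\}$ is $N(0,s^2)$ with $s=\|h\|_{\mathbb H_\Sigma}$. The lognormal moment formula then gives
\[
\mathbb E(\omega_h^2)=\mathbb E\,e^{2\xi-s^2}=e^{2s^2-s^2}=e^{s^2}.
\]
The final consequence follows because $t\mapsto e^{t^2}$ is increasing on $t\ge0$. Solving $e^{s^2}\le1+\chi^2_{\max}$ gives $s\le\{\log(1+\chi^2_{\max})\}^{1/2}$.

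\emph{Main obstacle.} The substantive content is the classical Cameron--Martin theorem, which I would cite rather than reprove. The step needing care is the transfer through $\mathcal T$. It requires checking that absolute continuity and Radon--Nikodym derivatives are preserved under a bimeasurable bijection onto a Borel set of full Gaussian measure. It also requires showing that the translated Gaussian law gives full mass to $\mathcal T(\mathbb X_0)$, so that $\mathcal T^{-1}$ is defined $\pi_h$-a.s. That mass condition is exactly what the hypothesis $\mathcal T(\mathbb X_0)+h\subseteq\mathcal T(\mathbb X_0)$ supplies.
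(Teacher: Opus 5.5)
Your proposal is correct and takes essentially the same route as the paper's proof: transport to the Gaussianising scale through the bimeasurable bijection $\mathcal T$, apply the Cameron--Martin dichotomy and density formula, compose with $\mathcal T$ to obtain $\omega_h$, and compute the second moment from the Gaussian moment generating function of the Paley--Wiener functional. You also check explicitly that $N(\mu+h,\Sigma)$ gives full mass to $\mathcal T(\mathbb X_0)$ and that densities transfer under $\mathcal T$; the paper states this step only briefly.
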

 
For a stationary Mat\'ern Gaussian field of smoothness $\nu$ on a bounded Lipschitz domain $\mathcal D\subset\mathbb R^d$, $\mathbb H_\Sigma$ is norm-equivalent to the Sobolev space $H^{\nu+d/2}(\mathcal D)$, whereas its sample paths belong almost surely to $H^s(\mathcal D)$ for every $0\le s<\nu$. Theorem~\ref{thm:admissible_class} therefore says that a policy may displace the exposure surface by a profile smoother than the surface itself, and by nothing else. $\|h\|_{\mathbb H_{\Sigma}}$ measures how far a policy departs from the observed exposure process and by Theorem~\ref{thm:admissible_class} a bound $\chi^{2}_{\max}$ on the relative variance of the weight is exactly a budget $\|h\|_{\mathbb H_{\Sigma}}\leq\{\log(1+\chi^{2}_{\max})\}^{1/2}$. What does not qualify is any displacement carrying the roughness of the field, since sample paths lie outside $\mathbb H_{\Sigma}$ almost surely. 
 
Gaussian-measure equivalence has entered causal inference before, through overlap conditions when the \emph{covariates} are functional or high-dimensional \citep{Ghosh2019}. Here the infinite-dimensional object is the treatment itself, and the comparison is between its observational law and the pushforward of that law under a policy, which makes admissibility a property of the intervention and allows policies to be classified directly. Theorem~\ref{thm:admissible_class} is the functional analogue of the incremental propensity score interventions of \citet{Kennedy2019}, which perturb the treatment law in a direction chosen so that positivity holds by construction. The difference is that the admissible directions here are fixed by the covariance of the field rather than chosen by the analyst.
 
Theorem~\ref{thm:admissible_class} exhibits a class of admissible policies but does not say whether others exist. The next theorem shows that it is the whole class among regular, non-decreasing policies acting pointwise on the Gaussianising scale. Caps, proportional reductions and hinges motivate this pointwise form, although hard caps and non-differentiable hinges require separate arguments because they do not satisfy condition~\textup{(i)} below.
 
\begin{assumption}[Local scaling and quadratic variation]
\label{ass:local_scaling}
The field $Z=\mathcal T(\dot A)$ has continuous sample paths, and there exist $\alpha\in(0,1)$, a unit vector $e$ and a continuous positive $\sigma:\mathcal D\to(0,\infty)$ such that, uniformly on compacta, 
\begin{equation}
    \mathbb E\{Z(t+ve)-Z(t)\}^{2}
    =\sigma^{2}(t)\,|v|^{2\alpha}\{1+o(1)\},
    \qquad v\to0 .
    \label{eq:local_scaling}
\end{equation}
In addition, for every fixed segment $L=\{t_0+ve:v\in[0,\ell]\}\subset\mathcal D$, with $\ell>0$, with, for $c<c'$ in $[0,\ell]$, $\delta_{n}=(c'-c)2^{-n}$, $v_{j}=c+j\delta_{n}$, the quadratic variations 
\begin{equation}
    Q_{n}(Z;[c,c'])
    =\delta_{n}^{1-2\alpha}\sum_{j=0}^{2^{n}-1}
      \bigl\{Z(t_{0}+v_{j+1}e)-Z(t_{0}+v_{j}e)\bigr\}^{2},
    \label{eq:dyadic_qv}
\end{equation}
satisfy 
\begin{equation*}
    Q_n(Z;[c,c'])
    \longrightarrow
    \int_c^{c'}\sigma^2(t_0+ve)\,\dd v
\end{equation*}
almost surely, simultaneously for all $c<c'$ in $\mathscr Q_\ell=(\mathbb Q\cap[0,\ell])\cup\{0,\ell\}$.
\end{assumption}
The local scaling describes short-range second moments, while the quadratic-variation requirement specifies the pathwise signature used below. Both requirements hold for centred stationary Gaussian fields with exponential covariance, taking $\alpha=1/2$, and with Mat\'ern covariance of smoothness $\nu\in(0,1)$, taking $\alpha=\nu$. Their covariance functions satisfy the regularity conditions of the applicable Gaussian quadratic-variation results \citep{Baxter1956,Gladyshev1961}. The theorem below concerns $\alpha\in(0,1)$ and extensions to smoother fields require a separate higher-order formulation and additional regularity conditions.
 
\begin{theorem}[No other pointwise policy]
\label{thm:pointwise_dichotomy}
Let Assumption~\ref{ass:local_scaling} hold, with $\Sigma$ of infinite rank and $\Sigma(t,t)>0$ for $\rho$-almost every $t$. Let the policy act pointwise on the Gaussianising scale, $\Phi=\mathcal T^{-1}\circ\Phi^{\mathcal T}\circ\mathcal T$ with $\Phi^{\mathcal T}(z)(t)=\varphi\{t,z(t)\}$, and suppose $\varphi:\mathcal D\times\mathbb R\to\mathbb R$ satisfies
\begin{enumerate}
\item[(i)] $\varphi$ is jointly continuous, $\partial_u\varphi$ exists and is jointly continuous, and $\varphi(t,\cdot)$ is non-decreasing for $\rho$-almost every $t$;
\item[(ii)] the explicit spatial variation of $\varphi$ is smoother than that of the field, which holds in particular whenever $\varphi(\cdot,u)$ is Lipschitz on $\mathcal D$, uniformly for $u$ in compact sets.
\end{enumerate}
Then $\Phi_{\#}P_{0}\ll P_{0}$ if and only if $\varphi(t,u)=u+h(t)$ for $\rho$-almost every $t$ and every $u\in\mathbb R$, with $h\in\mathbb H_{\Sigma}$; equivalently $\Phi=\mathcal T^{-1}\circ(\,\cdot+h)\circ\mathcal T$. In that case $\Phi_{\#}P_{0}\sim P_{0}$ with weight \eqref{eq:cm_policy_weight}.
\end{theorem}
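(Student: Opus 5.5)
The "if" direction is Theorem~\ref{thm:admissible_class}: if $\varphi(t,u)=u+h(t)$ with $h\in\mathbb H_{\Sigma}$, then $\Phi=\Phi_h$, and equivalence holds with weight \eqref{eq:cm_policy_weight}. For the converse, I would work throughout on the Gaussianising scale. Since $\mathcal T$ is a bimeasurable bijection on a set of full measure, $\Phi_{\#}P_0\ll P_0$ is equivalent to $\Phi^{\mathcal T}_{\#}N(\mu,\Sigma)\ll N(\mu,\Sigma)$. The plan is to use the almost sure pathwise quadratic variation of Assumption~\ref{ass:local_scaling} as a separating event. Write $W=\Phi^{\mathcal T}(Z)$ and fix a segment $L$ in direction $e$. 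The event
\[
E_L=\Bigl\{z:\ Q_n(z;[c,c'])\to\int_c^{c'}\sigma^2(t_0+ve)\,\dd v\ \text{for all } c<c'\in\mathscr Q_\ell\Bigr\}
\]
has $P_0$-probability one. So if $\Phi^{\mathcal T}_{\#}P_0\ll P_0$, then $W\in E_L$ almost surely.

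Next I would compute the quadratic variation of $W$ along $L$. Write the increment as
\[
W(t_{j+1})-W(t_j)=\bigl[\varphi\{t_{j+1},Z(t_{j+1})\}-\varphi\{t_j,Z(t_{j+1})\}\bigr]+\bigl[\varphi\{t_j,Z(t_{j+1})\}-\varphi\{t_j,Z(t_j)\}\bigr].
\]
By (ii), the first term is $O(\delta_n)$ uniformly, since the paths are continuous and hence bounded on $L$. After scaling by $\delta_n^{1-2\alpha}$ and summing $2^n$ terms, its contribution to $Q_n$ is $O(\delta_n^{2-2\alpha})\to0$, because $\alpha<1$. The cross term vanishes by Cauchy–Schwarz. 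By the mean value theorem and joint continuity of $\partial_u\varphi$, the second term equals $\partial_u\varphi\{t_j,Z(t_j)\}\{Z(t_{j+1})-Z(t_j)\}$ up to a factor $1+o(1)$ that is uniform in $j$, using uniform continuity of the paths. This gives the weighted quadratic variation
\[
Q_n(W;[c,c'])\to\int_c^{c'}\bigl[\partial_u\varphi\{t_0+ve,Z(t_0+ve)\}\bigr]^2\sigma^2(t_0+ve)\,\dd v
\]
almost surely. To justify this limit I would approximate the continuous weight by step functions on dyadic subintervals with rational endpoints and apply Assumption~\ref{ass:local_scaling} on each piece, which is why the assumption requires simultaneity over $\mathscr Q_\ell$. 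Requiring $W\in E_L$ and matching integrals over all rational $[c,c']$ then gives $\{\partial_u\varphi(t,Z(t))\}^2=1$ for every $t\in L$, almost surely. Monotonicity gives $\partial_u\varphi\ge0$, so $\partial_u\varphi\{t,Z(t)\}=1$ along the path.

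To pass from the path to all $u$, I would vary the segments over a countable dense family covering $\mathcal D$ up to a null set. Since $\Sigma(t,t)>0$, the marginal $Z(t)$ is nondegenerate Gaussian, with support $\mathbb R$. Continuity of $\partial_u\varphi$ plus full support lets me conclude $\partial_u\varphi(t,u)=1$ for all $u$ and $\rho$-a.e.\ $t$. The support argument needs care: I would take a countable dense set of $(t,u)$ and note that each neighbourhood event $\{Z(t)\in(u-\epsilon,u+\epsilon)\}$ has positive probability. Hence $\varphi(t,u)=u+h(t)$, with $h(t)=\varphi(t,0)$ continuous and in $\mathbb X$ because $\rho(\mathcal D)<\infty$. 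Theorem~\ref{thm:admissible_class} then forces $h\in\mathbb H_{\Sigma}$, and the Feldman–Hájek dichotomy upgrades absolute continuity to equivalence.

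The main obstacle is the step identifying the limit of $Q_n(W)$ with a random, path-dependent weight. It needs uniform control of the mean-value remainders and an approximation argument that commutes with the almost sure limits on countably many intervals. The infinite-rank hypothesis enters only through Theorem~\ref{thm:admissible_class}, which rules out non-Cameron–Martin shifts.
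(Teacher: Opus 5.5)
Your proposal is correct and follows the paper's proof essentially step for step. The shared steps are: the quadratic-variation event $E_L$ as an almost-sure signature transported by absolute continuity; the same split of each increment into a spatial part controlled by (ii) and a $u$-part handled by the mean value theorem with a step-function approximation; matching integrals on rational intervals to get $(\partial_u\varphi)^2=1$; non-degenerate marginals plus continuity to extend to every $u$; and Cameron--Martin to force $h\in\mathbb H_{\Sigma}$. Three small repairs bring it fully in line: write the mean-value remainder additively, as $o(1)\,|Z(t_{j+1})-Z(t_j)|$ uniformly in $j$, rather than as a factor $1+o(1)$, which fails where $\partial_u\varphi$ vanishes; obtain $h\in\mathbb X$ from $h=W-Z$ with both in $L_2(\rho)$ rather than from continuity and $\rho(\mathcal D)<\infty$, since a continuous $h$ need not be bounded on $\mathcal D$; and check, as the paper does via Lusin--Souslin, that $E_L$, defined through point evaluations of the unique continuous representatives, is a Borel subset of $L_2(\rho)$.
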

 
The theorem classifies pointwise policies acting on the Gaussianising scale under (i) and (ii). It does not classify nonlocal policies, policies without a monotone pointwise form, or hard caps and hinges, which fail (i) and are treated in \S\ref{sec:inadmissible}. The condition is that the policy displace the surface without altering its fine-scale structure: $\partial_{u}\varphi\equiv1$, with the remaining displacement smooth enough to lie in $\mathbb H_{\Sigma}$. The proof uses the local variation of Assumption~\ref{ass:local_scaling} as an almost-sure signature of the observational law, statistics of the kind used to recover local properties of Gaussian fields from dense observations \citep{Istas1997,Anderes2009}. Requiring the signature to be preserved under $\Phi_{\#}P_{0}\ll P_{0}$ forces $(\partial_{u}\varphi)^{2}=1$, monotonicity gives $\partial_{u}\varphi=1$, and what remains is a translation, which Theorem~\ref{thm:admissible_class} classifies. The rigidity is of a different kind from the classical sufficient conditions for absolute continuity under nonlinear transformations of Gaussian space \citep{Ramer1974,Kusuoka1984} as within the pointwise class, no transformation other than a translation preserves absolute continuity, whatever its regularity. 

\subsection{How positivity fails}
\label{sec:inadmissible}
A policy fails positivity if it assigns positive probability to an event having observational probability zero. The first mechanism is flattening. A cap $\Phi(f)=\min(f,c)$ sends every concentration above the threshold to $c(t)$ and a hinge $\Phi(f)=(f-\delta)_{+}$ sends every concentration below $\delta(t)$ to zero, so on the set where either binds the transformed surface takes a prescribed value exactly. If the observational marginals are atomless, the event that the surface equals that value on a set of positive $\rho$-measure has observational probability zero and policy probability equal to the probability that the policy binds, and absolute continuity fails whenever that probability is positive. For a continuous field this is whenever the policy does anything at all. Proposition~\ref{prop:flattening} in the Appendix states this for a general pointwise rule with a flat section. What the argument gives is failure of absolute continuity, however a hinge acts as a translation wherever it does not bind, so the argument alone does not establish mutual singularity.
 
The second is rescaling, which fails for a different reason. On the Gaussianising scale $Z\mapsto rZ$ sends $N(\mu,\Sigma)$ to $N(r\mu,r^{2}\Sigma)$. The two laws share a Cameron--Martin space, since $(r^{2}\Sigma)^{1/2}=r\Sigma^{1/2}$, but equivalence also requires the covariance comparison operator to differ from the identity by a Hilbert-Schmidt operator, and here that operator is $(r^{2}-1)\mathrm I$ \citep[Thm.~2.7.2]{Bogachev2007}. Because $\Sigma$ has infinite rank, a non-zero multiple of the identity is not Hilbert--Schmidt, so Feldman-H\'ajek gives mutual singularity for every $r\neq1$. What one realisation pins down here is the local amplitude $\sigma$ of Assumption~\ref{ass:local_scaling}, which the rescaling multiplies by $r$.
 
The third is roughness. A policy acting pointwise and nonlinearly, whether or not it depends on the realised field, displaces $Z$ by $F(Z)$ with $F(z)(t)$ a function of $z(t)$, so the displacement inherits the spatial roughness of $z$ itself and sample paths of a Gaussian field lie outside $\mathbb H_{\Sigma}$ almost surely. Random displacements survive when they are smooth enough in space. If $F:\mathbb X\to\mathbb H_{\Sigma}$ is $\mathbb H_{\Sigma}$- differentiable with $DF(z)$ Hilbert--Schmidt and $\mathrm I+DF(z)$ invertible for every $z$, and $z\mapsto z+F(z)$ is injective, then $\Phi=\mathcal T^{-1}\circ\{\mathrm{id}+F\}\circ\mathcal T$ preserves equivalence, with weight given by the Ramer-Kusuoka formula \citep{Ramer1974,Kusuoka1984}; see \citet{Ustnel2000} for the systematic theory. 
 
\subsection{Positivity and structural modelling}
\label{sec:model_free_displacement}

Failure of policy positivity rules out the weighting identity for that policy. A structural response model can nevertheless support identification under the conditions of Proposition~\ref{prop:structural_identification}. The identity-link model also simplifies the exposure information needed to define the target as the policy enters through its mean displacement,
\begin{equation*}
    m_\Phi(t)
      =\mathbb E_{P_0}\{\Phi(\dot A)(t)-\dot A(t)\}.
\end{equation*}
The signed mean exposure measure then has density $m_\Phi$ with respect to $\rho$. Consequently, two policies with the same mean displacement have the same average causal contrast under the identity-link response model. This follows from linearity and does not require a finite-dimensional kernel basis.

For a pointwise policy, the mean displacement depends only on the exposure distribution at each location. An additive profile $\Phi(\dot a)=\dot a+h$ gives $m_\Phi=h$ without requiring an exposure model. A proportional reduction gives $m_\Phi(t)=(r-1)\mathbb E\{\dot A(t)\}$, so only the mean exposure surface is needed. A cap instead gives $m_\Phi(t) =-\mathbb E\bigl[\{\dot A(t)-c(t)\}_+\bigr]$, which generally requires information about the marginal distribution beyond its mean. Once the target has been specified, identification from the realised conditional means remains governed by Proposition~\ref{prop:structural_identification}.

The above distinguishes the role of the response model from that of Cameron-Martin admissibility. When the exposure itself is Gaussian, an additive profile admits a weight exactly when $h\in\mathbb H_\Sigma$. Such a policy therefore has both a weight and a mean displacement specified directly by the intervention. The known-displacement property also holds for additive profiles outside $\mathbb H_\Sigma$, although their policies fail positivity. For log-Gaussian exposure, a Cameron-Martin translation on the log scale produces $\Phi_h(\dot a)=e^h\dot a$, with $m_{\Phi_h}(t) =\{e^{h(t)}-1\}\mathbb E\{\dot A(t)\}$. This policy may admit a weight while its target on the exposure scale still depends on the mean exposure surface. Positivity alone therefore does not make a structural target independent of the exposure law either.

In practice, there may be multiple exposure mappings that may be plausible when specifying a structural model. Sensitivity to the exposure mapping should be assessed through the effect of the same policy under each candidate model. However, similar goodness of fit does not establish that these effects agree. Indeed, policy positivity, and hence Cameron-Martin admissibility, play a role in the relationship between the accuracy of structural response modeling and policy effect differences. The following result connects agreement of outcome predictions to agreement of policy effects, without imposing a particular response model.

\begin{proposition}[Agreement of policy predictions]
\label{prop:policy_model_agreement}
Work conditionally on $C$, with $P_0$ and $\pi=\Phi_{\#}P_0$ fixed. For $j=1,2$, let $m_j:\mathbb X\to\mathbb R$ be a candidate conditional mean response for $\overline Y$, with $m_j\in L_2(P_0)\cap L_1(\pi)$, and define its model-implied policy contrast by
\begin{equation*}
    \Theta_{\nu}^{(j)}
    =\int_{\mathbb X}m_j(f)\,(\pi-P_0)(\dd f).
\end{equation*}
If $\pi$ is a Cameron-Martin policy with $h\in\mathbb H_\Sigma$, then
\begin{equation}
    \bigl|\Theta_{\nu}^{(1)}-\Theta_{\nu}^{(2)}\bigr|
    \leq
    \left\{\exp\bigl(\|h\|_{\mathbb H_\Sigma}^{2}\bigr)-1\right\}^{1/2}
    \left[
        \mathbb E_{P_0}
        \left\{\bigl(m_1(\dot A)-m_2(\dot A)\bigr)^2\right\}
    \right]^{1/2}.
    \label{eq:policy_model_agreement}
\end{equation}
The multiplicative constant is sharp over the stated response class.

If instead $\pi\not\ll P_0$, there exists $c>0$ such that, for every $\epsilon>0$, two continuous response functions $m_1,m_2:\mathbb X\to[0,1]$ can satisfy
\begin{equation*}
    \|m_1-m_2\|_{L_2(P_0)}<\epsilon,
    \qquad
    \bigl|\Theta_{\nu}^{(1)}-\Theta_{\nu}^{(2)}\bigr|>c.
\end{equation*}
\end{proposition}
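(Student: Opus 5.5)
For the Cameron--Martin case I will use the weighting identity of Theorem~\ref{thm:admissible_class}. Since $\pi=\Phi_{h\#}P_0\sim P_0$ with weight $\omega_h$, write $d=m_1-m_2\in L_2(P_0)$. Integrability of $d$ under $\pi$ holds because $m_j\in L_1(\pi)$. Then
\begin{equation*}
\Theta_\nu^{(1)}-\Theta_\nu^{(2)}=\mathbb E_{P_0}\{(\omega_h-1)d(\dot A)\}.
\end{equation*}
Cauchy--Schwarz bounds this by $\|\omega_h-1\|_{L_2(P_0)}\|d\|_{L_2(P_0)}$. Because $\mathbb E_{P_0}\omega_h=1$, we have $\|\omega_h-1\|^2_{L_2(P_0)}=\mathbb E\omega_h^2-1=\exp(\|h\|^2_{\mathbb H_\Sigma})-1$ by \eqref{eq:cm_policy_weight}, which gives \eqref{eq:policy_model_agreement}.

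For sharpness, I will take $m_1=\omega_h-1$ (composed with the field, a measurable function of $f$) and $m_2=0$. Then $m_1\in L_2(P_0)$, and $m_1\in L_1(\pi)$ because $\mathbb E_\pi\omega_h=\mathbb E_{P_0}\omega_h^2<\infty$. With this choice equality holds in Cauchy--Schwarz. If the response class is meant to be restricted further, for example to bounded or continuous functions, I would instead approximate $\omega_h-1$ in $L_2(P_0)$ by truncations $\{(\omega_h-1)\wedge M\}\vee(-M)$, since the ratio tends to the constant. When $h=0$ sharpness is trivial.

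For the singular case, Lebesgue decomposition applied to $\pi\not\ll P_0$ gives a Borel set $E$ with $P_0(E)=0$ and $c_0:=\pi(E)>0$. Set $c=c_0/2$. The aim is a continuous $m_1:\mathbb X\to[0,1]$ approximating $\mathbf 1_E$ simultaneously in $L_1(\pi)$ and $L_2(P_0)$, with $m_2=0$. Since $\mathbb X=L_2(\rho)$ is Polish, both $P_0$ and $\pi$ are Radon. By inner regularity there is a closed set $K\subseteq E$ with $\pi(K)>c_0-\eta$, and trivially $P_0(K)=0$. By outer regularity of $P_0$ there is an open $G\supseteq K$ with $P_0(G)<\eta$. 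I then take the Urysohn-type function
\begin{equation*}
m_1(f)=\frac{\mathrm{dist}(f,G^c)}{\mathrm{dist}(f,G^c)+\mathrm{dist}(f,K)},
\end{equation*}
which is continuous, takes values in $[0,1]$, equals $1$ on $K$ and $0$ off $G$. The separation is then at least $\int m_1\,\dd\pi-\int m_1\,\dd P_0\ge(c_0-\eta)-\eta$. The observational discrepancy satisfies $\|m_1\|_{L_2(P_0)}\le P_0(G)^{1/2}<\eta^{1/2}$. Choosing $\eta<\min(\epsilon^2,c_0/4)$ gives both $\|m_1-m_2\|_{L_2(P_0)}<\epsilon$ and $|\Theta^{(1)}_\nu-\Theta^{(2)}_\nu|>c$. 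Because $m_j$ is bounded, the integrability requirements hold automatically.

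The main obstacle is making the approximation of $\mathbf 1_E$ simultaneously good under the two mutually singular measures. This is resolved by choosing the closed set $K$ inside the $P_0$-null set, so that regularity of the single measure $P_0$ supplies a small open neighbourhood of $K$. A secondary point is the precise meaning of ``sharp'', which depends on whether $\omega_h-1$ itself belongs to the stated class; I handle this by exhibiting equality with $\omega_h-1$ directly and noting the truncation fallback.
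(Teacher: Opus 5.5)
Your proof is correct and follows the paper's argument. The Cameron--Martin bound comes from the weighting identity and Cauchy--Schwarz with $\mathbb E_{P_0}\{(\omega_h-1)^2\}=\exp(\|h\|^2_{\mathbb H_\Sigma})-1$, sharpness from $m_1-m_2=\omega_h-1$, and the singular case from a continuous $[0,1]$-valued approximation of the indicator of a closed $P_0$-null set of positive $\pi$-mass. The only difference is cosmetic: the paper takes $K$ compact and uses $\{1-n\,\operatorname{dist}(f,K)\}_+\downarrow\mathbf 1_K$ with dominated convergence, whereas you use outer regularity of $P_0$ and a single Urysohn function, which gives the same conclusion with explicit constants.
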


The Cameron--Martin norm therefore quantifies the possible amplification of response-model disagreement, even when effects are evaluated through a structural model. A large norm weakens this guarantee and admissibility alone does not ensure small sensitivity. Agreement here is averaged over possible exposure fields under $P_0$, rather than assessed only at the realised field. Within the pointwise class of Theorem~\ref{thm:pointwise_dichotomy}, policies outside the Cameron-Martin class lack this uniform control. A restricted structural model may nevertheless identify their effects and provide stronger guarantees. Sensitivity analyses should therefore compare the effects of the same \textit{policy} across plausible exposure mappings and response assumptions, rather than just prediction accuracy or goodness of fit, alongside checking identification under each model.

\subsection{Reconstructed exposure surfaces}
\label{sec:reconstructed_exposure}
A further issue arises when exposure is reconstructed from incomplete observations. Write $\mathcal O$ for the information used and $\widehat a(t)=\mathbb E\{\dot A(t)\mid \mathcal O\}$ for the reconstruction. An affine $\varphi(t,\cdot)$ commutes with conditional expectation, so additive and proportional policies may be evaluated on
$\widehat a$ exactly. A cap may not: by Jensen's inequality
\begin{equation}
    \{\widehat a(t)-c(t)\}_{+}
      \leq\mathbb E[\{\dot A(t)-c(t)\}_{+}\mid\mathcal O],
    \label{eq:plugin_cap}
\end{equation}
strictly whenever the conditional distribution places mass on both sides of the threshold; if that distribution is Gaussian with standard deviation $s(t)$ and $\widehat a(t)=c(t)$, the gap is $s(t)/\surd(2\pi)$. Applying a cap to a fitted mean surface therefore understates the expected exposure removed, by most where the reconstruction is least certain, and expected removal should instead be computed under the conditional exposure distribution. The direction of the resulting error in the causal contrast depends in addition on the exposure-response relationship, and non-affinity alone does not guarantee a non-zero error. The general statement is Proposition~\ref{prop:plugin_exposure} in the Appendix.

\subsection{Discrete source configurations}
\label{sec:sources}
Discrete source configurations are the contrasting case, and the contrast isolates what is special about fields. Suppose that, conditionally on $C$, $g_{0}$ and $\pi$ are Poisson processes with intensities $\lambda_{0}(\cdot\mid C)$ and $\lambda_{\pi}(\cdot\mid C)$ of finite total mass and $\lambda_{\pi}\ll\lambda_{0}$. Then $\pi\ll g_{0}$ always, and there is no singularity to contend with. The reason is that a configuration is a finite collection of points, so it carries no analogue of the fine-scale structure that Assumption~\ref{ass:local_scaling} makes a single realisation of a field determine with certainty. For $a=\sum_{m=1}^{M}\delta_{T_{m}}$,
\begin{equation}
    \omega(a,C)
    =\exp\Bigl\{\int_{\mathcal D}(\lambda_{0}-\lambda_{\pi})(t\mid C)\,\dd t\Bigr\}
    \prod_{m=1}^{M}\frac{\lambda_{\pi}(T_{m}\mid C)}{\lambda_{0}(T_{m}\mid C)},
    \qquad
    \mathbb E_{g_{0}}(\omega^{2}\mid C)=1+\chi^{2}(\pi\,\|\,g_{0}),
    \label{eq:poisson_weight}
\end{equation}
the product being the familiar inverse-probability form at the realised locations and the exponential compensating for disagreement in the expected number of sources. Placement confounding corresponds to $\lambda_{0}$ depending on components of $C$ on which the target policy does not. Identification and asymptotic theory for stochastic interventions on repeatedly observed spatio-temporal point patterns \citep{Papadogeorgou2022} assume a bounded relative overlap condition between the two intensities and \eqref{eq:poisson_weight} locates that condition as a statement about the second moment of the weight rather than about absolute continuity, which here holds automatically.

Positivity is not, however, sufficient. From \eqref{eq:poisson_weight},
$\chi^{2}(\pi\,\|\,g_{0})=\exp\{\int_{\mathcal D}(\lambda_{\pi}-\lambda_{0})^{2}/\lambda_{0}\,\dd t\}-1$, so if the two intensities differ by a fixed local amount on a fixed fraction of the domain the relative variance of the weight grows exponentially in $|\mathcal D|$. Global weighting therefore degrades as the domain grows, which is the opposite of what an increasing-domain argument requires, and estimation must exploit local structure or the response model. Let $B_{i}\subseteq\mathcal D$ be a prespecified interference neighbourhood for unit $i$, write $\Lambda_{B_{i}}=\int_{B_{i}}\lambda_{0}(t\mid C)\,\dd t$ for the expected local source count, and let $\pi_{i}$ and $g_{0i}$ denote the restrictions of the two laws to $B_{i}$. For a tilted target intensity $\lambda_{\pi}(t\mid C)=\exp\{\alpha+\eta^{\top}\zeta(t,C)\}\lambda_{0}(t\mid C)$, with $\zeta$ a vector of policy-relevant spatial features, the tilted law is again Poisson, so tilting commutes with restriction and the local weight is exact. Writing $\bar\eta=\sup_{t,C}|\alpha+\eta^{\top}\zeta(t,C)|$, the local second moment is at most $\exp\{(e^{\bar\eta}-1)^{2}\Lambda_{B_{i}}\}$, so a budget $\chi^{2}(\pi_{i}\,\|\,g_{0i})\leq\chi^{2}_{\max}$ is enforced
before outcome analysis by
\begin{equation}
    \bar\eta\leq\log\Bigl\{1+\sqrt{\log(1+\chi^{2}_{\max})/\Lambda_{B_{i}}}
    \Bigr\}.
    \label{eq:tilt_weight_bound}
\end{equation}
The field analogue is the Cameron--Martin bound $\|h\|_{\mathbb H_{\Sigma}}\leq\{\log(1+\chi^{2}_{\max})\}^{1/2}$ of \S\ref{sec:admissible}, which by \eqref{eq:paley_wiener_law} asks that the policy displace the linear exposure summary $\widehat h$ by at most that many observational standard deviations. The Cameron--Martin norm is thus the field analogue of the log-intensity perturbation, and the two budgets differ in how they scale. In the source case the admissible perturbation decreases as $\Lambda_{B_{i}}^{-1/2}$, so a larger interference neighbourhood carries more placement information but yields a more variable likelihood ratio. In the field case the corresponding quantity is fixed by the covariance and does not depend on how much of the domain is used. If the target is specified relative to an externally estimated reference intensity rather than $\lambda_{0}$, \eqref{eq:tilt_weight_bound} holds only up to the discrepancy between them. When $\lambda_{0}$ is itself estimated on the study domain, policy construction must likewise be separated from the outcome analysis.

\section{The Role of Discretisation}
\label{sec:discretisation}
A finite representation can admit a policy weight even when the continuum policy law is not absolutely continuous with respect to the observational law. The question is whether the weights remain controlled as increasingly informative representations of the same two laws are examined. The following result characterises square-integrable continuum weights through their projected second moments.

\begin{proposition}[Diagnostics under representation refinement]
\label{prop:diagnostic_futility}
Let $P=P_0$ and $Q=\Phi_\#P_0$, and let $\Pi_m$ be measurable finite-dimensional representations such that
\begin{equation*}    
    \mathcal F_m=\sigma(\Pi_m),
    \qquad
    \mathcal F_m\subseteq\mathcal F_{m+1},
    \qquad
    \sigma\!\left(\bigcup_{m\geq1}\mathcal F_m\right)
      =\mathcal B(\mathbb X).
\end{equation*}
Let $P_m=(\Pi_m)_\#P$ and $Q_m=(\Pi_m)_\#Q$, and suppose
$Q_m\ll P_m$ for every $m$. Write
\begin{equation*}
    \omega^{(m)}
    =
    \frac{\dd Q_m}{\dd P_m},
    \qquad
    \chi_m^2
    =
    \mathbb E_{P_m}\{(\omega^{(m)})^2\}-1,
\end{equation*}
allowing the value $+\infty$. Then
\begin{enumerate}
    \item[(i)] $\chi_m^2$ is non-decreasing in $m$;
    \item[(ii)] if $Q\ll P$ with $\dd Q/\dd P\in L_2(P)$, then $\chi_m^2 \longrightarrow \chi^2(Q\,\|\,P) <\infty$;
    \item[(iii)] if $Q\not\ll P$, or if $Q\ll P$ but     $\dd Q/\dd P\notin L_2(P)$, then $\chi_m^2\longrightarrow+\infty$.
\end{enumerate}
Consequently, existence or finite variance of a policy weight at any fixed representation dimension does not imply continuum policy positivity. What distinguishes a policy admitting a square-integrable continuum weight is boundedness of $\chi_m^2$ under refinement.
\end{proposition}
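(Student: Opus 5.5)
Work throughout on the filtration $(\mathcal F_m)$, identifying $P_m$, $Q_m$ with the restrictions $P|_{\mathcal F_m}$, $Q|_{\mathcal F_m}$, so that $\omega^{(m)}\circ\Pi_m$ is a version of $\dd Q|_{\mathcal F_m}/\dd P|_{\mathcal F_m}$ and $\chi_m^2=\mathbb E_P\{(\omega^{(m)}\circ\Pi_m)^2\}-1$. Write $L_m=\omega^{(m)}\circ\Pi_m$. Since $Q_{m}\ll P_m$ for every $m$, the standard argument shows that $(L_m,\mathcal F_m)$ is a non-negative $P$-martingale. For $B\in\mathcal F_m\subseteq\mathcal F_{m+1}$ we have $\mathbb E_P(L_{m+1}\mathbf 1_B)=Q(B)=\mathbb E_P(L_m\mathbf 1_B)$. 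Part (i) then follows from conditional Jensen: $L_m=\mathbb E_P(L_{m+1}\mid\mathcal F_m)$ gives $L_m^2\le \mathbb E_P(L_{m+1}^2\mid\mathcal F_m)$. Taking expectations gives $\mathbb E_P L_m^2\le \mathbb E_P L_{m+1}^2$, and this holds with values in $[0,\infty]$.

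For (ii), suppose $Q\ll P$ with $L=\dd Q/\dd P\in L_2(P)$. Then $L_m=\mathbb E_P(L\mid\mathcal F_m)$, because both sides integrate to $Q(B)$ over each $B\in\mathcal F_m$. By Lévy's upward theorem in $L_2$, $L_m\to\mathbb E_P(L\mid\mathcal F_\infty)$ in $L_2(P)$. Since $\mathcal F_\infty=\sigma(\bigcup_m\mathcal F_m)=\mathcal B(\mathbb X)$, the limit is $L$ itself. Convergence in $L_2$ gives $\mathbb E_P L_m^2\to\mathbb E_P L^2$, that is, $\chi_m^2\to\chi^2(Q\|P)$.

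For (iii), I argue by contraposition. By (i), $\chi_m^2$ has a limit in $[0,\infty]$. Suppose the limit is finite, so that $\sup_m\mathbb E_P L_m^2<\infty$. Then $(L_m)$ is an $L_2$-bounded martingale. It therefore converges almost surely and in $L_2(P)$ to some $L_\infty\in L_2(P)$ with $L_m=\mathbb E_P(L_\infty\mid\mathcal F_m)$. Hence $Q(B)=\mathbb E_P(L_\infty\mathbf 1_B)$ for all $B$ in the $\pi$-system $\bigcup_m\mathcal F_m$. Both sides are finite measures of equal total mass $1$, so the $\pi$–$\lambda$ theorem extends the equality to $\mathcal F_\infty=\mathcal B(\mathbb X)$. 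Thus $Q\ll P$ with $\dd Q/\dd P=L_\infty\in L_2(P)$, contradicting each alternative in (iii). So $\chi_m^2\to+\infty$.

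The concluding sentence combines (ii) and (iii): a square-integrable continuum weight exists iff $\sup_m\chi_m^2<\infty$. Finiteness at any single $m$ says nothing, since in case (iii) each $\chi_m^2$ may still be finite. The main point requiring care is the identification step for the pushforward densities, namely that $\omega^{(m)}\circ\Pi_m$ really is the Radon–Nikodym derivative on $\mathcal F_m=\sigma(\Pi_m)$. This is the change-of-variables formula: for $B=\Pi_m^{-1}(E)$, $Q(B)=Q_m(E)=\int_E\omega^{(m)}\dd P_m=\int_B\omega^{(m)}\circ\Pi_m\,\dd P$. It needs only measurability of $\Pi_m$. Beyond that, the only delicate point is the uniqueness-of-extension argument in (iii), which uses $\sigma(\bigcup\mathcal F_m)=\mathcal B(\mathbb X)$ essentially.
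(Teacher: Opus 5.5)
Your proposal is correct and follows the paper's proof essentially step for step: the restricted densities form a non-negative $P$-martingale, conditional Jensen gives (i), $L_2$ martingale convergence with $\sigma(\bigcup_m\mathcal F_m)=\mathcal B(\mathbb X)$ gives (ii), and (iii) is proved by contraposition, extending $Q(B)=\mathbb E_P(L_\infty\mathbf 1_B)$ from $\bigcup_m\mathcal F_m$ to all Borel sets. You are more explicit than the paper on two points it leaves implicit, namely identifying $\omega^{(m)}\circ\Pi_m$ as the density on $\mathcal F_m$ and the $\pi$--$\lambda$ uniqueness step; the only caveat is that your claim that every $\chi_m^2$ can be finite in case (iii) comes from an example, such as the paper's centred Gaussian rescaling, rather than from (iii) itself.
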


The same change-of-measure argument gives $\sqrt{\chi_m^2}$ as the sharp factor controlling policy disagreement from observational response disagreement at representation $m$, whenever $\chi_m^2$ is finite. Its divergence therefore also describes the loss of uniform control in structural-model sensitivity analyses. For Gaussian shifts and centred rescalings, the projected second moments have explicit forms.

Two cases make this concrete. Let $Z\sim N(\mu,\Sigma)$ and let $\Pi_{p}$ be representations whose sigma-fields are nested and generate $\mathcal B(\mathbb X)$, with $Z_{p}=\Pi_{p}(Z)\sim N_{p}(\mu_{p},K_{p})$ and $K_{p}$ non-singular. For a fixed shift $h$ represented coherently, so that $\Pi_{p}(Z+h)=Z_{p}+h_{p}$ almost surely with $h_{p}$ deterministic, the representation-level weight satisfies $\mathbb E_{g_{0}}(\omega_{p}^{2})=\exp(h_{p}^{\top}K_{p}^{-1}h_{p})$, increasing to $\exp(\|h\|_{\mathbb H_{\Sigma}}^{2})$ when $h\in\mathbb H_{\Sigma}$ and to $+\infty$ otherwise. For a centred rescaling $Z\mapsto rZ$ with $r>0$, $\mu_{p}=0$ and $\Pi_{p}(rZ)=rZ_{p}$ almost surely,
\begin{equation}
\label{eq:rescaling_weight}
    \mathbb E_{g_{0}}(\omega_{p}^{2})
      =(2r^{2}-r^{4})^{-p/2}\ \ (0<r<\surd2),
    \qquad
    +\infty\ \ (r\geq\surd2),
\end{equation}
so for fixed $r\neq1$ the second moment grows geometrically in $p$. Cameron-Martin shifts therefore have uniformly bounded second moments under refinement, whereas a fixed non-trivial centred rescaling does not, despite admitting a density ratio at every finite dimension. A non-zero mean changes the growth rate in the second case.

Cameron--Martin shifts therefore have uniformly bounded second moments under refinement, whereas a fixed non-trivial centred rescaling has geometrically increasing second moments whenever those moments are finite. A diagnostic at fixed resolution describes the projected problem and provides a lower bound on the continuum divergence, but cannot by itself establish boundedness under further refinement. In general, divergence excludes a square-integrable continuum weight; it does not necessarily exclude absolute continuity.

This gives a concrete interpretation of Condition C4 of \citet{Jiang2026}, which bounds the mean and variance of the density ratio using constants independent of the number of retained functional principal-component scores. Proposition~\ref{prop:diagnostic_futility} identifies the continuum requirement represented by such uniform second-moment control. In particular, the centred Gaussian rescaling above violates this control despite admitting a density ratio at every finite dimension.

These results concern increasingly informative representations of fixed observational and policy laws. Increasing the number of basis functions while changing the fitted covariance or the policy does not automatically satisfy this requirement. For a nonlinear policy, projecting the transformed field can also differ from transforming a truncated field and then projecting it back into the basis. For practice, the scientific policy can remain a transformation of the surface, but its implementation in the chosen representation must be explicit. A surface transformation may leave a finite basis span. Its pushforward remains defined on the ambient field space, but fails absolute continuity if it assigns positive probability outside the observational model's span. Projecting the result back defines a modified policy and should be stated as such. Agreement at grid points need not imply agreement between interpolated surfaces. The representation is therefore an important part of analysis reporting and a refinement study should distinguish changes in resolution from changes in the policy or observational model.

\section{Example: PM2.5 Pollution}
\label{sec:example}
\citet{DefraTargets2023} set two legally binding targets for 2040 for PM2.5 pollution. Both statutory targets of \S1 concern the same modelled surface. Under the classification of \S\ref{sec:admissibility} they lie on opposite sides of  Theorem~\ref{thm:pointwise_dichotomy}. The concentration target is a cap, which flattens the surface on a set of positive
measure and therefore fails policy positivity by the flattening argument of \S\ref{sec:inadmissible}. The exposure reduction target is a proportional reduction, which on the scale that renders the field Gaussian is the translation $h=(\log r)\mathbf 1$ and, the constants lying in $\mathbb H_{\Sigma}$ for the covariances in routine use, is admissible. Compliance with both is assessed from monitoring sites. Here we consider the evaluation of the effects of achieving them, for which the modeled concentration surface is used.

We take the 2021-based background maps of modelled annual mean $\mathrm{PM}_{2.5}$ published for local air quality management, a $1\,$km grid covering the Midlands, UK. The surface is treated as $\widehat a(t)=\mathbb E\{\dot A(t)\mid\mathcal O\}$ in the sense of \S~\ref{sec:reconstructed_exposure}, and a Mat\'ern covariance is fitted to $\log\widehat a$ by variogram, the smooth component alone providing $\Sigma$. Including the nugget would give $\mathbb H_{K}=\mathbb R^{m}$ and render every direction admissible in the representation. Six policies are considered: uniform reductions of $5$, $10$ and $35\%$; removal of the road-transport contribution, obtained from the source attribution accompanying the maps; and caps at $10$ and $5\,\mu\mathrm{g\,m^{-3}}$, the second being the World Health Organization guideline value. No outcome data are used anywhere in this section.

\begin{figure}
    \centering
    \includegraphics[width=\linewidth]{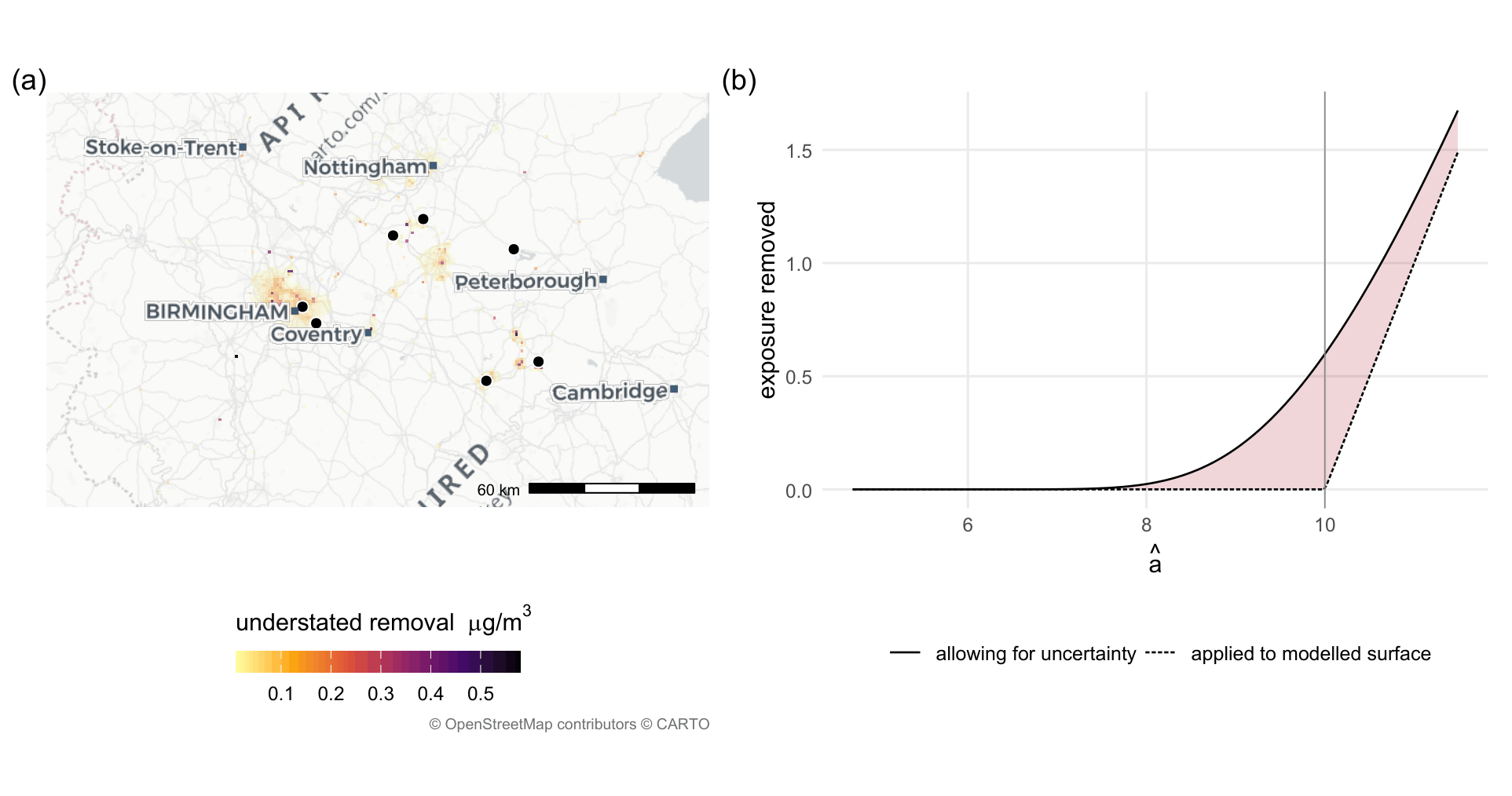}
    \caption{Field policies applied to modelled annual mean $\mathrm{PM}_{2.5}$, Midlands, 2021. (a) Exposure removed by a $10\,\mu\mathrm{g\,m^{-3}}$ cap, $\mathbb E[(\dot A-c)_{+}\mid\mathcal O]-(\widehat a-c)_{+}$ the marked cells are the only ones at which the same cap applied to the modelled surface removes any exposure at all. (b) The two quantities compared in (a) as functions of $\widehat a$, the shaded gap is plugin exposure disrepancy. Basemap \textcopyright\ OpenStreetMap contributors.}
    \label{fig:laqm}
\end{figure}

Figures~\ref{fig:laqm} and \ref{fig:laqm2} report four consequences. The cap applied to the reconstructed surface removes exposure at seven cells, whereas $\mathbb E[(\dot A-c)_{+}\mid\mathcal O]$ is positive across the conurbation and the plug-in understates the exposure removed by 6\%, the discrepancy being largest along the threshold contour where $\widehat a\approx c$ and the reconstruction is least certain. The weight budget $h^{\top}K^{-1}h$ saturates under refinement for the uniform reductions and grows for the caps and for road-sector removal, as Proposition~\ref{prop:diagnostic_futility} requires. Only reductions below about $7\%$ admit a weight with $\chi^{2}\leq100$. A weightable policy may displace the linear exposure summary $\widehat h$ by at most $2.15$ observational standard deviations, and the statutory $35\%$ target displaces it by considerably more. The caps lie outside the weightable class altogether. Finally, panel (d) in Figure~\ref{fig:laqm2} reports the sharp factor of Proposition~\ref{prop:policy_model_agreement}: two candidate response models agreeing to within $\epsilon$ in $L_{2}(P_{0})$ can imply policy contrasts differing by $A\epsilon$. The factor is a ratio of outcome-scale quantities, so it is computed here without an outcome model or an exposure-response mapping; it is the weight budget of panel (c) in the units in which that budget acts on a structural analysis, and it is what remains when no weighting is performed. The $5\%$ reduction gives $A=1.7$, so observational agreement transfers to the policy contrast essentially intact. The $10\%$ reduction gives $A=18$, and the statutory $35\%$ target $A=1.3\times10^{21}$: two response models would have to agree to within $10^{-23}$ of an outcome standard deviation for its effect to be pinned to within a hundredth of one. For the caps and for road-sector removal the factor lies far above this range already at the source resolution and increases without bound under refinement, so in the continuum no degree of observational agreement constrains the policy contrast at all.

\begin{figure}
    \centering
    \includegraphics[width=\linewidth]{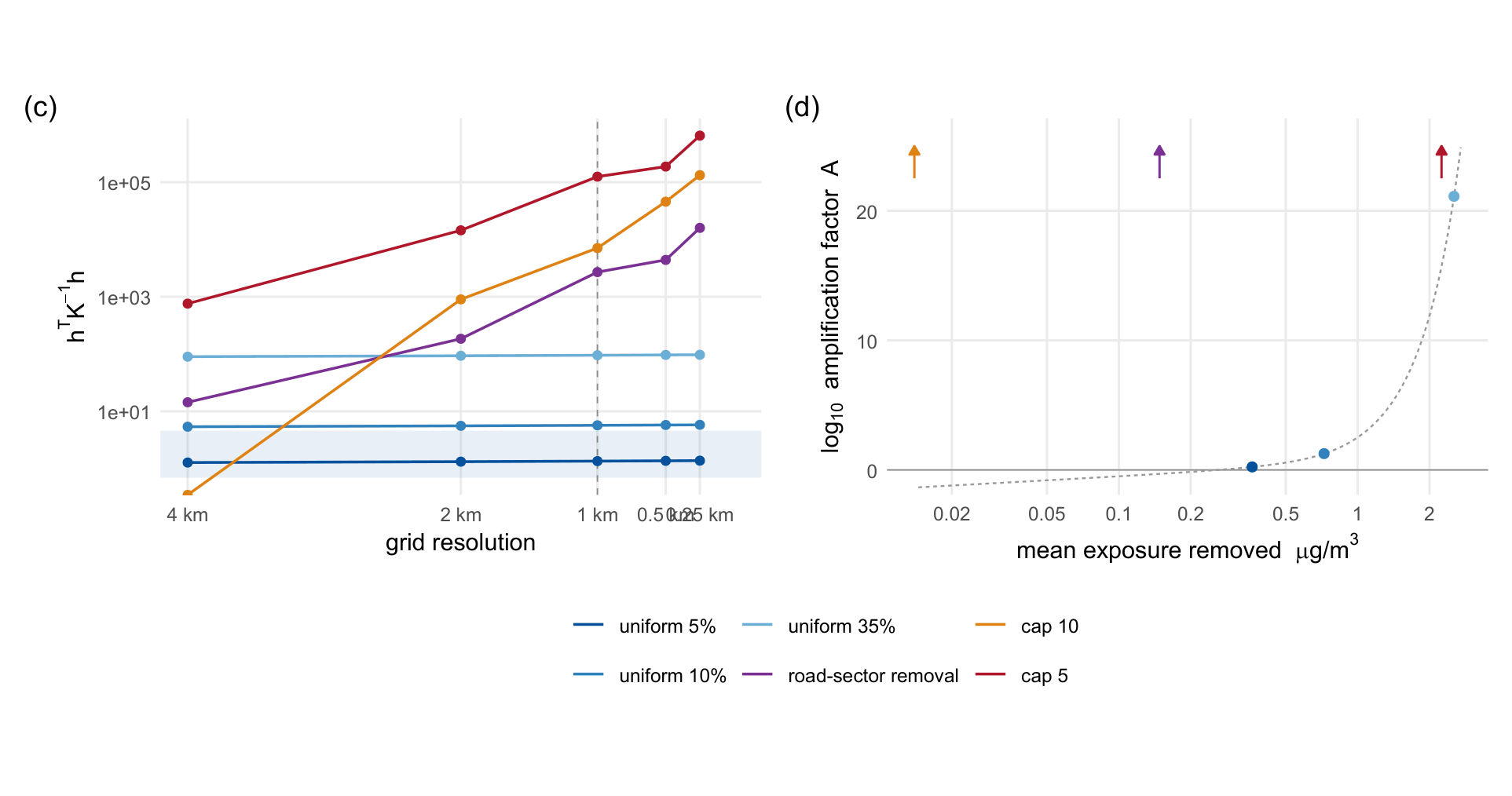}
    \caption{Field policies applied to modelled annual mean $\mathrm{PM}_{2.5}$, Midlands, 2021. (c) Weight budget $h^{\top}K^{-1}h$ against grid resolution; the dashed line marks the resolution of the source data, finer resolutions being implied by the fitted covariance rather than observed, and the horizontal band the region admitting a weight with $\chi^{2}$ between $1$ and $100$. (d) Sharp amplification factor $A$ of Proposition~\ref{prop:policy_model_agreement} against the mean exposure removed by each policy: two response models agreeing to within $\epsilon$ in $L_{2}(P_{0})$ can imply policy contrasts differing by $A\epsilon$. Arrows leaving the frame mark policies whose factor lies far above the plotted range at both resolutions and increases without bound under refinement. Basemap \textcopyright\ OpenStreetMap contributors.}
    \label{fig:laqm2}
\end{figure}

The amount of exposure a policy removes does not govern this. The cap at $5\,\mu\mathrm{g\,m^{-3}}$ and the $35\%$ reduction remove almost the same mean exposure, $2.25$ and $2.53\,\mu\mathrm{g\,m^{-3}}$, and fall on opposite sides of the classification, while the cap at $10\,\mu\mathrm{g\,m^{-3}}$ removes $0.014\,\mu\mathrm{g\,m^{-3}}$, less than any other policy considered, and is the most strongly penalised of the six: it displaces seven isolated cells, and it is exactly the non-smoothness of that displacement, not its size, that the Cameron--Martin norm measures. These factors are worst cases over an unrestricted response class, sharp there; a restricted structural model may do better, and a large factor is not evidence that a particular estimator is biased.

\section{Discussion}
 he consequence for practice may be uncomfortable. A policy that caps or flattens an exposure surface has no weighting representation, at any sample size or measurement resolution. Three responses are available and the choice belongs to the protocol rather than to the analysis. The scientific question may be respecified as a displacement policy, whose mean displacement is then fixed by the intervention itself and which is admissible whenever its profile lies in $\mathbb H_{\Sigma}$. The policy may be retained and identified through the structural response model, at the cost of restricting the kernel class, or it may be replaced by a smoothed approximation, in which case the smoothing parameter and the representation dimension become part of the estimand and must be reported with it. Admissibility is necessary for the first route but not sufficient, and the statutory $35\%$ reduction of \S\ref{sec:example} illustrates the gap as it is a displacement policy on the scale that renders the field Gaussian and admits a weight, yet on the fitted covariance its amplification factor is of order $10^{21}$. In practice many analyses take the second route implicitly, and identification then rests on structural restrictions or regularisation rather than on data.
 
Failure of positivity and weak structural identification are independent conditions. A policy admitting no weight may be exactly identified with a well-conditioned $R_{A}$, and an admissible policy with large Cameron-Martin norm admits a weight of little practical use. They meet, however, in the definition of the estimand, where two scales are in play. Among regular non-decreasing pointwise policies satisfying Assumption~\ref{ass:local_scaling}, Theorem~\ref{thm:pointwise_dichotomy} shows that weightability holds exactly for translations on the scale that renders the field Gaussian. Whether the target can be written down without an exposure model is a separate question, settled on the scale on which the response model is linear: by \S\ref{sec:model_free_displacement} an additive profile on that scale has mean displacement $h$, whereas a cap requires the marginal distribution of the field at each location. The two scales coincide for a Gaussian field under a response model linear in concentration and come apart for a log-Gaussian one, where a Cameron--Martin translation is admissible but its target on the concentration scale still involves the mean exposure surface. Admissibility and model-free specification must therefore be checked separately.
 
The consequences for structural analysis are then of three kinds. The estimand ceases to be a functional of the observational law alone and is defined only relative to $(\mathcal K,\mathcal B)$, so varying the specification varies the target rather than probing the robustness of an estimate. Because it is no longer such a functional, the semiparametric devices that ordinarily guard against outcome-model misspecification are unavailable, as is the comparison against a weighting estimate that would ordinarily detect it. And when the mean displacement itself depends on the exposure law, uncertainty about that law enters the definition of the target and not only its precision. For a cap the relevant quantities are the exceedance probability and the exposure density at the threshold, so the target is most exposure-model-dependent exactly where the policy binds. The same feature makes $\Phi(\widehat a)-\widehat a$ understate $m_{\Phi}$ by most where the reconstruction is least certain, as in Figure~\ref{fig:laqm}(a) and (b). A policy specified directly as a displacement profile removes this dependence entirely, and the choice between the two belongs to the definition of the estimand rather than to the analysis.
 
A further consequence concerns how the analysis is carried out rather than how the estimand is defined. Every computation takes place in a finite representation, and by Proposition~\ref{prop:diagnostic_futility} a representation may admit a policy weight of finite variance when no square-integrable continuum weight exists. A diagnostic computed at working resolution therefore describes the projected problem and bounds the continuum one from below, and the two behaviours are visible on the same surface in Figure~\ref{fig:laqm}(c): the budget saturates under refinement for the uniform reductions and grows without bound for the caps and for road-sector removal. The representation, and any policy-smoothing parameter, should be reported with the estimand, and a refinement study should distinguish changes of resolution from changes of the policy or of the fitted covariance.
 
These distinctions bear on existing environmental-health analyses. \citet{Vicedo-Cabrera2021} estimated mortality attributable to anthropogenic warming by applying estimated temperature--mortality relationships to factual and counterfactual histories, and \citet{Wu2025} evaluated urban-greening scenarios through their predicted effects on heat-related mortality. Flexible Bayesian models have been used to estimate spatially and temporally varying pollution--mortality relationships while simultaneously modelling latent confounding structure \citep[e.g.][]{Zaccardi2026}. These target deterministic counterfactual histories rather than policies of the pushforward form considered here, so our results bear on them only through the identification of the response functional subsequently evaluated, and that is where the risk lies. In models rich enough for spatial effects and exposure-response surfaces to compete for the same variation, the restrictions or priors that resolve the competition are unobjectionable in themselves, but their role changes once the fitted model is used to evaluate an intervention: they may become the assumptions that identify the requested contrast. Proposition~\ref{prop:policy_model_agreement} is what makes the concern quantitative rather than rhetorical, since agreement of fitted response surfaces under the observational exposure law constrains agreement of policy effects only through a factor the policy itself determines. Identification should therefore be assessed for the particular contrast being interpreted causally, separately from posterior convergence, model fit or precision.
 
Several limitations point to further work. The positivity results assume a field Gaussian on a prespecified scale, and the pointwise rigidity theorem assumes in addition local scaling of index $\alpha\in(0,1)$ together with a pathwise quadratic-variation signature; smoother fields require a higher-order formulation, and the surface fitted in \S\ref{sec:example} lies outside that range, so the classification there rests on Theorem~\ref{thm:admissible_class} and on the flattening argument alone. The structural results assume a single realised treatment state, an identity link and a linear kernel class, whereas many applications observe repeated spatio-temporal fields. The response model is also linear in the exposure state: under a nonlinear dose--response, such as the threshold forms common in environmental epidemiology, the mean displacement no longer determines the policy contrast and even an additive profile acquires a target that depends on the exposure law. Extending the theory to these settings, and from exact to weak identification, would permit diagnostics quantifying how strongly a particular contrast is supported by the observed exposure geometry. Developing such diagnostics, together with empirical audits of existing functional and spatial exposure analyses, is the natural next step towards practical guidance on when policy effects are learned from the data and when they depend primarily on structural or regularising assumptions.

\section*{Acknowledgements}
In keeping with the Leiden Declaration on Artificial Intelligence and Mathematics around disclosure of AI tool use, we acknowledge the role of large language models in the development of this work. In particular, the link between policy positivity and Cameron-Martin spaces was first identified by Anthropic's \textit{Claude} LLMs during other related work. Claude was used to support the development of other key statements and proof refinement, and in particular to develop the approach used in the proof of Theorem 2.

\bibliographystyle{plainnat}
\bibliography{adapt}

\appendix
\section{Additional Propositions}

\begin{proposition}[Flattening]
\label{prop:flattening}
Let $\Phi$ act pointwise, $\Phi(f)(t)=\varphi\{t,f(t)\}$, and suppose there exist a Borel set $E\subseteq\mathcal D$ with $\rho(E)>0$, a measurable $w:E\to\mathbb R$ and measurable sets $\mathcal J_{t}\subseteq\mathbb R$ such that $\varphi(t,u)=w(t)$ for all $u\in\mathcal J_{t}$ and $t\in E$. Write $E_{f}=\{t\in E:f(t)\in\mathcal J_{t}\}$ and assume
\begin{enumerate}
    \item[(a)] \emph{the policy binds}: $P_{0}\{\rho(E_{\dot A})>0\}>0$;
    \item[(b)] \emph{no pre-existing plateau}: for $\rho$-almost every
    $t\in E$, the distribution of $\dot A(t)$ under $P_{0}$ is atomless.
\end{enumerate}
Then $\Phi_{\#}P_{0}\not\ll P_{0}$.
\end{proposition}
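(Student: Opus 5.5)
The plan is to exhibit a single measurable set of functions that carries positive mass under $\Phi_{\#}P_{0}$ and zero mass under $P_{0}$. The natural choice is the set of surfaces that sit exactly on the prescribed value $w$ over a set of positive measure:
\begin{equation*}
    \mathcal L=\Bigl\{f\in\mathbb X:\ \rho\bigl(\{t\in E: f(t)=w(t)\}\bigr)>0\Bigr\}.
\end{equation*}
First I will check that $\mathcal L$ is Borel in $\mathbb X=L_2(\rho)$ and well defined on equivalence classes. Changing $f$ on a $\rho$-null set does not change $\rho(\{f=w\}\cap E)$. For measurability, the map $(f,t)\mapsto \mathbf 1\{f(t)=w(t)\}\mathbf 1_E(t)$ is jointly measurable on $\mathbb X\times\mathcal D$, using a jointly measurable version of evaluation, for instance via limits of local averages. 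Fubini then makes $f\mapsto\rho(\{t\in E:f(t)=w(t)\})$ measurable.

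\emph{Step 1 (policy mass).} For any $f$ and any $t\in E_f$, the hypothesis gives $\Phi(f)(t)=\varphi\{t,f(t)\}=w(t)$. Hence $\{t\in E:\Phi(f)(t)=w(t)\}\supseteq E_f$. Whenever $\rho(E_f)>0$, this puts $\Phi(f)\in\mathcal L$. Assumption (a) then yields
\begin{equation*}
    \Phi_{\#}P_0(\mathcal L)\ \ge\ P_0\{\rho(E_{\dot A})>0\}>0.
\end{equation*}
Measurability of $f\mapsto\rho(E_f)$ follows by the same Fubini argument, using joint measurability of $\{(t,u):u\in\mathcal J_t\}$. I will take this as part of the hypothesis that the sets $\mathcal J_t$ are measurable, reading it as joint measurability.

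\emph{Step 2 (observational mass).} By Tonelli on the product $\mathcal D\times\mathbb X$ with the jointly measurable version of the field,
\begin{equation*}
    \mathbb E_{P_0}\bigl[\rho(\{t\in E:\dot A(t)=w(t)\})\bigr]
    =\int_E P_0\{\dot A(t)=w(t)\}\,\rho(\dd t)=0.
\end{equation*}
The integrand vanishes because $w(t)$ is a fixed number and, by (b), $\dot A(t)$ is atomless for $\rho$-almost every $t\in E$. A nonnegative random variable with zero mean is zero almost surely, so $P_0(\mathcal L)=0$. Together with Step 1, this shows $\Phi_{\#}P_0\not\ll P_0$.

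The main obstacle is the measure-theoretic bookkeeping rather than the idea. The pointwise values $f(t)$ are not defined for $L_2$ classes, so both the Fubini argument and the definition of $\Phi$ must go through a jointly measurable representative. The paper supplies this for $\dot A$ (continuous sample paths), and for the composite $\Phi(\dot A)$ it follows from the joint measurability of $\varphi$. I would therefore phrase everything in terms of the process $(\omega,t)\mapsto\dot A(t)$ and its image, and compute $\Phi_{\#}P_0(\mathcal L)$ as $P_0\{\Phi(\dot A)\in\mathcal L\}$. Done this way, measurability of $\mathcal L$ is needed only with respect to the laws involved, which avoids delicate descriptive-set-theory issues.
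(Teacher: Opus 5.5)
Your proposal is correct and follows the paper's proof step for step. It uses the same set of surfaces equal to $w$ on a subset of $E$ of positive $\rho$-measure, Tonelli with (b) to show this set is $P_0$-null, and the inclusion $\{f:\rho(E_f)>0\}\subseteq\Phi^{-1}(\mathcal L)$ with (a) to give it positive policy mass. The only difference is in the measurability step: the paper avoids a jointly measurable evaluation map by writing $f\mapsto\rho\{t\in E:f(t)=w(t)\}$ as the pointwise limit of the $L_2(\rho)$-continuous functionals $f\mapsto\int_E\exp\{-m|f(t)-w(t)|\}\,\rho(\dd t)$.
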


\begin{proposition}[Policies applied to reconstructed exposure]
\label{prop:plugin_exposure}
Let $\mathcal O$ denote the information used to reconstruct the field, and let $\widehat a(t)=\mathbb E\{\dot A(t)\mid\mathcal O\}$. Assume the conditional expectations below are finite. For $\rho$-almost every $t$:
\begin{enumerate}
\item[(i)]
If $\varphi(t,\cdot)$ is affine, then
\[
    \mathbb E[\varphi\{t,\dot A(t)\}\mid\mathcal O]
      =\varphi\{t,\widehat a(t)\}.
\]

\item[(ii)]
For a cap at $c(t)$,
\[
    \{\widehat a(t)-c(t)\}_+
      \leq
    \mathbb E[\{\dot A(t)-c(t)\}_+\mid\mathcal O].
\]
The inequality is strict whenever the conditional distribution assigns positive probability to both sides of the cap. If that distribution is Gaussian with standard deviation $s(t)$ and $\widehat a(t)=c(t)$, the difference is
\[
    \frac{s(t)}{\sqrt{2\pi}}.
\]
\end{enumerate}
\end{proposition}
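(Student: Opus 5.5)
The final statement is Proposition~\ref{prop:plugin_exposure}. I would prove both parts pointwise in $t$ using a regular conditional distribution. Fix $t$ outside a $\rho$-null set and let $K_t(\cdot\mid\mathcal O)$ be a regular conditional law of $\dot A(t)$ given $\mathcal O$. Such a law exists because $\dot A(t)$ is real valued and, by the continuous-path version fixed in \S\ref{sec:field_setting}, jointly measurable. The finiteness assumption gives $\int|u|\,K_t(\dd u\mid\mathcal O)<\infty$ almost surely. Then $\widehat a(t)=\int u\,K_t(\dd u\mid\mathcal O)$, and every conditional expectation in the statement is an integral against $K_t$. This reduces each claim to a statement about one real probability measure with finite mean $m=\widehat a(t)$ and one fixed real number $c=c(t)$, which is $\mathcal O$-measurable or deterministic.

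For (i), write $\varphi(t,u)=\alpha(t)+\beta(t)u$. Linearity of the integral against $K_t$ gives $\alpha(t)+\beta(t)m=\varphi\{t,\widehat a(t)\}$ directly.

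For (ii), the map $u\mapsto(u-c)_+$ is convex, so conditional Jensen gives $(m-c)_+\le\int(u-c)_+\,K_t(\dd u)$.

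For strictness, I would avoid the general Jensen equality theory and argue directly. We have $\int(u-c)_+\,\dd K_t-(m-c)_+=\int(u-c)_+\,\dd K_t-\bigl\{\int(u-c)\,\dd K_t\bigr\}_+$. If $m\ge c$, the gap equals $\int(u-c)_-\,\dd K_t$, which is positive when $K_t$ charges $\{u<c\}$. If $m<c$, the gap equals $\int(u-c)_+\,\dd K_t$, which is positive when $K_t$ charges $\{u>c\}$. So mass on both sides of $c$ forces strict inequality in either case. One small point of care is interpreting ``both sides'' as the open half-lines $\{u<c\}$ and $\{u>c\}$, and noting that the equality cases are exactly those where $K_t$ is supported on one closed side.

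For the Gaussian evaluation, take $K_t=N(c,s^2)$. Then $(m-c)_+=0$, and $\mathbb E(X-c)_+=s\,\mathbb E(Z_+)$ with $Z$ standard normal. Since $\mathbb E(Z_+)=\int_0^\infty z\,e^{-z^2/2}\,\dd z/\sqrt{2\pi}=1/\sqrt{2\pi}$, the gap is $s/\sqrt{2\pi}$.

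The main obstacle is measure-theoretic rather than analytic. Making the pointwise-in-$t$ conditional laws coherent, so that the statement holds for $\rho$-almost every $t$ simultaneously with the $P_0$-almost-sure qualification, needs joint measurability in $(t,\omega)$. That joint measurability lets Fubini convert an almost-sure statement for each $t$ into one holding for almost every $t$.
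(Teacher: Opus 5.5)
Your proof is correct and follows the paper's argument. You use linearity for (i) and conditional Jensen for the convex hinge $u\mapsto\{u-c(t)\}_+$ in (ii). For strictness you use the same two-case computation, giving a gap of $\int(u-c)_-\,\dd K_t$ or $\int(u-c)_+\,\dd K_t$ according to which side of $c(t)$ the value $\widehat a(t)$ lies on. At the threshold you compute $s(t)\,\mathbb E(G_+)=s(t)/\sqrt{2\pi}$.

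The reduction to a regular conditional law and the Fubini remark add rigour that the paper leaves implicit; they do not change the route. For fixed $t$ that law exists simply because $\dot A(t)$ is real valued, so joint measurability matters only for a statement holding simultaneously over $t$.
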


\section{Proofs}

\begin{proof}[Equation~\ref{eq:policy_weighting_identity}]
Consistency, conditional exchangeability and the specified conditional-mean versions give, almost surely,
\begin{equation*}
    \mathbb E(Y_i\mid A,C)
    =\int F_i(A,C,u)\,Q_i(\dd u\mid C)
    =\mu_i(A,C).
\end{equation*}
The moment assumption justifies conditional expectation, finite summation and the following change of measure:
\begin{align*}
    \mathbb E_{g_0}\{\omega(A,C)\overline Y\mid C\}
    &=
    \frac1N\sum_{i=1}^{N}
    \int_{\mathscr A}
        \omega(a,C)\mu_i(a,C)\,g_0(\dd a\mid C)\\
    &=
    \frac1N\sum_{i=1}^{N}
    \int_{\mathscr A}\mu_i(a,C)\,\pi(\dd a\mid C)\\
    &=\Psi(\pi).
\end{align*}
The same calculation with weight $1$ gives $\mathbb E_{g_0}(\overline Y\mid C)=\Psi(g_0)$. Subtracting yields
\begin{equation*}
\Theta_\nu
    =\mathbb E_{g_0}\!\left[
        \{\omega(A,C)-1\}\overline Y\mid C
    \right].
\end{equation*}
All identities hold almost surely.
\end{proof}

\begin{proof}[Proposition~\ref{prop:structural_identification}]
No topology on $\mathcal K$ is needed, since the relevant kernel has finite codimension and the argument is linear algebra. For $\delta=(\delta_{\gamma},\delta_{\alpha}, \delta_{\tau},\delta_{\phi})\in\Xi$, define the linear map $T:\Xi\to\mathbb R^{N}$ by
\begin{equation*}
    (T\delta)_{i}
    =
    H_{i}^{\top}\delta_{\gamma}
    +\delta_{\alpha i}
    +\delta_{\tau}x_{i}(A)
    +\langle\delta_{\phi},\sigma_{i}(A)\rangle .
\end{equation*}
Thus, for the true parameter $\theta$, $T\theta$ is precisely the vector $\bigl\{\mathbb E(Y_{1}\mid A,C),\ldots, \mathbb E(Y_{N}\mid A,C)\bigr\}^{\top}.$ By \textup{(S4)}, this vector is all that the observational law tells us about $\theta$. Hence two parameter values are observationally equivalent if and only if their difference belongs to $\ker T$.

Under the identity link, the policy contrast is the linear functional
\begin{equation*}
    \Lambda(\delta)
    =
    c_{\nu}\delta_{\tau}
    +\langle\delta_{\phi},w_{\nu}\rangle .
\end{equation*}
There is no contribution from $(\delta_{\gamma},\delta_{\alpha})$ because these terms do not depend on the treatment state and therefore cancel between the two policies. Consequently, $\Theta_{\nu}$ is identified if and only if $\Lambda(\delta)=0$ for every $\delta\in\ker T$: any two parameter values that give the same observed conditional means must also give the same policy contrast. Suppose this condition holds. Then $\Lambda$ depends on $\delta$ only through $T\delta$, so it defines a linear functional on $\operatorname{ran}T\subseteq\mathbb R^{N}$. Since $\operatorname{ran}T$ is a finite-dimensional subspace, this functional can be represented as an inner product with some $\lambda\in\mathbb R^{N}$. Thus
\begin{equation}
    \Lambda(\delta)=\xi^{\top}T\delta
    \qquad\text{for every }\delta\in\Xi .
    \label{eq:structural_proof_representation}
\end{equation}
Conversely, any such representation immediately implies that $\Lambda$ vanishes on $\ker T$.

It remains to determine when \eqref{eq:structural_proof_representation} holds. Varying $\delta_{\gamma}$ and $\delta_{\alpha}$ while setting the other components to zero gives $\xi\perp\operatorname{col}(H)$ for $\xi\perp\mathcal A$ and hence $\xi\perp\mathcal B$. Varying only $\delta_{\tau}$ gives $\sum_{i=1}^{N}\xi_{i}x_{i}(A)=c_{\nu}$ while varying only $\delta_{\phi}\in\mathcal K$ gives
\begin{equation*}
    \left\langle f,\,
        \sum_{i=1}^{N}\xi_{i}\sigma_{i}(A)-w_{\nu}
    \right\rangle=0
    \qquad\text{for every }f\in\mathcal K.
\end{equation*}
These are exactly conditions \textup{(i)--(iii)}. Finally, taking $\delta=\theta$ in
\eqref{eq:structural_proof_representation} gives
\begin{equation*}
    \Theta_{\nu}
    =\xi^{\top}T\theta
    =\sum_{i=1}^{N}\xi_{i}\,
      \mathbb E(Y_{i}\mid A,C).
\end{equation*}
If $\xi$ and $\xi'$ both satisfy \textup{(i)--(iii)}, then $(\xi-\xi')^{\top}T=0$, so both give the same value.
\end{proof}

\begin{proof}[Theorem~\ref{thm:admissible_class}]
Work conditionally on $C$, and let $\mathcal T:\mathbb X_0\to\mathcal T(\mathbb X_0)$ be the
Gaussianising transformation of Section~\ref{sec:field_setting}, with $Z=\mathcal T(\dot A)\sim N(\mu,\Sigma)$ . By construction,
\begin{equation*}
    \mathcal T\{\Phi_h(\dot A)\}
    =\mathcal T(\dot A)+h
    =Z+h.
\end{equation*}
Thus, on the Gaussianising scale, the policy $\Phi_h$ is simply translation by $h$.

The Cameron-Martin theorem states that the laws of $Z+h$ and $Z$ are equivalent if and only if $h\in\mathbb H_{\Sigma}$, and are mutually singular otherwise \citep[Thm.~2.4.5]{Bogachev2007}. When $h\in\mathbb H_{\Sigma}$, their Radon-Nikodym derivative is 
\begin{equation}
    \frac{\dd\mathcal L(Z+h)}{\dd\mathcal L(Z)}(z)
    =
    \exp\left\{
        \widehat h(z-\mu)
        -\frac12\|h\|_{\mathbb H_{\Sigma}}^{2}
    \right\}.
    \label{eq:cm_proof_density}
\end{equation}

Because $\mathcal T$ is a bimeasurable bijection on the full-measure state space $\mathbb X_0$, equivalence and mutual singularity are preserved when the two laws are transported back through $\mathcal T^{-1}$. Therefore $\pi_h=\Phi_{h\#}P_0$ is equivalent to $P_0$ when $h\in\mathbb H_{\Sigma}$ and mutually singular with $P_0$ otherwise. In particular, $\pi_h\ll P_0$ if and only if $h\in\mathbb H_{\Sigma}$; there is no intermediate case of one-way absolute continuity.

Substituting $z=\mathcal T(\dot a)$ into \eqref{eq:cm_proof_density} gives the policy weight
\begin{equation*}
    \omega_h(a)
    =
    \exp\left\{
        \widehat h\{\mathcal T(\dot a)-\mu\}
        -\frac12\|h\|_{\mathbb H_{\Sigma}}^{2}
    \right\},
\end{equation*}
which is \eqref{eq:cm_policy_weight}.

Finally, under $P_0$ the Paley--Wiener functional $\widehat h(Z-\mu)$ is centred Gaussian with variance $\|h\|_{\mathbb H_{\Sigma}}^{2}$. Hence the Gaussian moment-generating function gives
\begin{align*}
    \mathbb E_{g_0}(\omega_h^2\mid C)
    &=
    \exp\{-\|h\|_{\mathbb H_{\Sigma}}^{2}\}
    \mathbb E\left[
        \exp\{2\widehat h(Z-\mu)\}
        \,\middle|\,C
    \right]\\
    &=
    \exp\{-\|h\|_{\mathbb H_{\Sigma}}^{2}\}
    \exp\{2\|h\|_{\mathbb H_{\Sigma}}^{2}\}\\
    &=
    \exp\{\|h\|_{\mathbb H_{\Sigma}}^{2}\}.
\end{align*}
Therefore, for every $M\geq1$,
\begin{equation*}
    \mathbb E_{g_0}(\omega_h^2\mid C)\leq M
    \quad\Longleftrightarrow\quad
    \|h\|_{\mathbb H_\Sigma}\leq\sqrt{\log M}.
\end{equation*}
Taking $M=1+\chi^2_{\max}$ gives the stated bound.
\end{proof}

\begin{proof}[Theorem~\ref{thm:pointwise_dichotomy}]
Write
\begin{equation*}
    v_{j,n}=j2^{-n}\ell,\qquad
    t_{j,n}=t_0+v_{j,n}e,\qquad
    b_n=(\ell2^{-n})^{1-2\alpha},
\end{equation*}
and, for $0\le c<c'\le\ell$, set
\begin{equation*}
    \mathcal J_n(c,c')
    =
    \{j\in\{1,\ldots,2^n\}:c<v_{j,n}\le c'\}.
\end{equation*}
We first record the two displays referred to in the statement. Fix a segment $L=\{t_{0}+ve:v\in[0,\ell]\}\subset\mathcal D$ in the direction $e$ of Assumption~\ref{ass:local_scaling}, write $v_{j,n}=j2^{-n}\ell$, and for $[c,c']\subseteq[0,\ell]$ define the dyadic quadratic variation
\begin{equation}
    Q_n(f;[c,c'])
    =
    b_n\sum_{j\in\mathcal J_n(c,c')}
    \{f(t_{j,n})-f(t_{j-1,n})\}^{2}.
\end{equation}
By the quadratic-variation clause of Assumption~\ref{ass:local_scaling}, these variations converge almost surely to $\int_c^{c'}\sigma^2(t_0+ve)\,\dd v$ simultaneously for endpoints in $\mathscr Q_\ell$. The same conclusion then holds for every subinterval, since the limiting measure has no atoms. Condition~(ii) of the theorem is that, for every compact $K\subset\mathbb R$ and every such $L$ and $[c,c']$,
\begin{equation}
    b_n\sum_{j\in\mathcal J_n(c,c')}
    \sup_{u\in K}
    \bigl[
        \varphi(t_{j,n},u)-\varphi(t_{j-1,n},u)
    \bigr]^2
    \longrightarrow0.
    \label{eq:phi_spatial_variation}
\end{equation}
If $\varphi(\cdot,u)$ is Lipschitz with constant $L_{K}$ uniformly for $u\in K$, each summand is at most $L_{K}^{2}2^{-2n}\ell^{2}$ and there are at most $2^{n}(c'-c)/\ell+1$ of them, so the left-hand side of \eqref{eq:phi_spatial_variation} is $O(2^{n(2\alpha-2)})$ and vanishes because $\alpha<1$.

Let $P_Z=\mathcal T_{\#}P_0=N(\mu,\Sigma)$ and write $\Psi(z)(t)=\psi\{t,z(t)\}$, so that the policy on the original scale is $\Phi=\mathcal T^{-1}\circ\Psi\circ\mathcal T$. Since $\mathcal T$ is bimeasurable,
\begin{equation*}
    \Phi_{\#}P_0\ll P_0
    \quad\Longleftrightarrow\quad
    \Psi_{\#}P_Z\ll P_Z.
\end{equation*}
We therefore work throughout on the Gaussianising scale. If $\psi(t,u)=u+h(t)$ with $h\in\mathbb H_{\Sigma}$, sufficiency follows immediately from Theorem~\ref{thm:admissible_class}. For necessity, suppose $\Psi_{\#}P_Z\ll P_Z$, and write $W=\Psi(Z)$. Let $\mathscr C_\rho$ be the space of continuous, square-integrable functions on $\mathcal D$, equipped with uniform convergence on compact subsets together with $L_2(\rho)$ convergence. This is a Polish space. Its natural injection $\iota:\mathscr C_\rho\to\mathbb X$ is continuous and injective, since $\rho$ has full support. By the Lusin--Souslin theorem, $\mathbb X_c=\iota(\mathscr C_\rho)$ is Borel in $\mathbb X$ and $\iota^{-1}$ is Borel. Both $Z$ and $W$ belong to $\mathbb X_c$ almost surely, and their pointwise versions are their unique continuous representatives. All point evaluations below use these representatives. In particular, the event $E_L$ below, defined through countably many point evaluations and limits, is Borel in $\mathbb X_c$ and hence in $\mathbb X$.

Fix a segment $L=\{t_0+ve:v\in[0,\ell]\}\subset\mathcal D$ and a subinterval $[c,c']\subseteq[0,\ell]$. At resolution $n$ the interval $[0,\ell]$ is divided into dyadic pieces with grid points $v_{j,n} = j2^{-n}\ell$ for $j = 0,1,\dots,2^n$. Write $t(v) = t_0 + ve$ $z(v) = Z(t(v))$, and $w(v) = \psi(t(v), z(v))$ and let $\Delta_j$ denote an increment over $[v_{j-1,n},v_{j,n}]$.

\emph{Step 1: quadratic variation under a pointwise transformation.}
Decompose the transformed increment as
\begin{equation*}
    \Delta_j w
    =
    \psi\{t(v_{j-1,n}),z(v_{j,n})\}
    -
    \psi\{t(v_{j-1,n}),z(v_{j-1,n})\}
    +
    \Delta_j^{(t)},
\end{equation*}
where $t_{j,n}=t_0+v_{j,n}e$ and
\begin{equation*}
    \Delta_j^{(t)}
    =
    \psi\{t(v_{j,n}),z(v_{j,n})\}
    -
    \psi\{t(v_{j-1,n}),z(v_{j,n})\}.
\end{equation*}
By the mean value theorem,
\begin{equation*}
    \Delta_j w
    =
    \partial_u\psi(t(v_{j-1,n}),\xi_{j,n})\Delta_j z+\Delta_j^{(t)},
\end{equation*}
for some $\xi_{j,n}$ between $z(v_{j-1,n})$ and $z(v_{j,n})$. Almost surely, the continuous path $z$ has compact range $K=z([c,c'])$. By condition~\textup{(ii)},
\begin{equation*}
    b_n
    \sum_j\{\Delta_j^{(t)}\}^2
    \longrightarrow0.
\end{equation*}
Moreover, $\partial_u\psi$ is bounded on the relevant compact set, so Assumption~\ref{ass:local_scaling} implies that $b_n \sum_jg_{j,n}^2(\Delta_jz)^2$ is bounded. The cross term is therefore negligible by the Cauchy--Schwarz inequality. Since $z$ is uniformly continuous and $\partial_u\psi$ is uniformly continuous on compact sets,
\begin{equation*}
    \sup_j
    \left|
        \partial_u\psi\{t(v_{j-1,n}),\xi_{j,n}\}
        -
        \partial_u\psi\{t(v_{j-1,n}),z(v_{j-1,n})\}
    \right|
    \longrightarrow0,
\end{equation*}
Using Assumption~\ref{ass:local_scaling} on a finite partition of $[c,c']$ and approximating the continuous function $[\partial_u\psi\{t(v),z(v)\}]^2$ by step functions now gives for the dyadic quadratic variation defined in \eqref{eq:dyadic_qv}
\begin{equation}
    Q_n(W;[c,c'])
    \longrightarrow
    \int_c^{c'}
        \left(\partial_u\psi\{t(v),z(v)\}\right)^2\sigma^2(t(v))\,\dd v
    \qquad P_Z\text{-almost surely}.
    \label{eq:qv_chain_rule}
\end{equation}

\emph{Step 2: preservation of the pathwise signature forces unit derivative.}
For the fixed segment $L$, let $E_L$ be the event that the quadratic variation in Assumption~\ref{ass:local_scaling} has its stated limit on every subinterval with rational endpoints. This is a measurable event and, by countability and Assumption~\ref{ass:local_scaling},
\begin{equation*}
    P_Z(E_L)=1.
\end{equation*}
Since $\Psi_{\#}P_Z\ll P_Z$,
\begin{equation*}
    P_Z(W\in E_L)
    =
    \Psi_{\#}P_Z(E_L)
    =
    1.
\end{equation*}
Hence, comparing the defining limit of $E_L$ with
\eqref{eq:qv_chain_rule}, almost surely
\begin{equation*}
    \int_a^b
        \{\left(\partial_u\psi\{t(v),z(v)\}\right)^2-1\}\sigma^2(t(v))\,\dd v
    =0
\end{equation*}
for every rational $[a,b]\subseteq[0,\ell]$.

The function $\partial_u\psi\{t(v),z(v)\}$ is continuous and $\sigma^2$ is strictly positive. It follows that
\begin{equation*}
    \left(\partial_u\psi\{t(v),z(v)\}\right)^2=1
    \qquad\text{for every }v\in[0,\ell].
\end{equation*}
Indeed, if $\left(\partial_u\psi\{t(v),z(v)\}\right)^2-1$ were non-zero at some $v_0$, continuity would give a subinterval on which it had one strict sign, contradicting the preceding integral identity. Now fix a point $t$ for which $\Sigma(t,t)>0$ and $\psi(t,\cdot)$ is non-decreasing, and choose a segment of the above form containing $t$. The preceding argument gives
\begin{equation*}
    \left(\partial_u\psi(t,Z(t))\right)^2=1
    \qquad P_Z\text{-almost surely}.
\end{equation*}
Because $Z(t)$ is a non-degenerate Gaussian random variable, its density is strictly positive on $\mathbb R$. Therefore
\begin{equation*}
    \left(\partial_u\psi(t,u)\right)^2=1
\end{equation*}
for Lebesgue-almost every $u$, and hence for every $u$ by continuity of $\partial_u\psi(t,\cdot)$. A continuous function taking values only in $\{-1,1\}$ must be constant. Since $\psi(t,\cdot)$ is non-decreasing, its derivative cannot equal $-1$, and consequently
\begin{equation*}
    \partial_u\psi(t,u)=1
    \qquad\text{for every }u.
\end{equation*}
Thus, for $\rho$-almost every $t$,
\begin{equation*}
    \psi(t,u)=u+h(t),
    \qquad
    h(t)=\psi(t,0).
\end{equation*}
Equivalently, $\Psi(z)=z+h$ as an element of $L_2(\rho)$. Since $W-Z=h$ $\rho$-almost everywhere almost surely, and both $W$ and $Z$ belong to $L_2(\rho)$ almost surely, the deterministic displacement $h$ belongs to $L_2(\rho)$.
 
\emph{Step 3: the displacement must be Cameron--Martin.}
The transformed law is now the law of $Z+h$. By the Cameron--Martin theorem, this law is absolutely continuous with respect to the law of $Z$ if and only if
\begin{equation*}
    h\in\mathbb H_{\Sigma},
\end{equation*}
in which case the two laws are equivalent \citep[Thm.~2.4.5]{Bogachev2007}. The policy on the original scale is therefore
\begin{equation*}
    \Phi
    =
    \mathcal T^{-1}\circ(\,\cdot+h)\circ\mathcal T,
\end{equation*}
with weight \eqref{eq:cm_policy_weight}, completing the proof.
\end{proof}

\begin{proof}[Proposition~\ref{prop:flattening}]
For $f\in\mathbb X$, define
\begin{equation*}
    G(f)=\rho\{t\in E:f(t)=w(t)\},
    \qquad
    S=\{f\in\mathbb X:G(f)>0\}.
\end{equation*}
These definitions do not depend on the representative of $f$.
To establish measurability, set
\begin{equation*}
    G_m(f)
    =
    \int_E \exp\{-m|f(t)-w(t)|\}\,\rho(\dd t),
    \qquad m\ge1.
\end{equation*}
Since $\rho(E)<\infty$,
\begin{equation*}
    |G_m(f)-G_m(g)|
    \le
    m\,\rho(E)^{1/2}\|f-g\|_{L_2(\rho)},
\end{equation*}
so each $G_m$ is continuous. Dominated convergence gives $G(f)=\lim_{m\to\infty}G_m(f)$. Thus $G$ and $S$ are measurable.

Under the observational law, assumption~\textup{(b)} implies that, for $\rho$-almost every $t\in E$, $P_0\{\dot A(t)=v(t)\}=0$. Hence Tonelli's theorem gives
\begin{align*}
    \mathbb E_{P_0}\{G(\dot A)\}
    &=
    \int_E
        P_0\{\dot A(t)=v(t)\}\,\rho(\dd t)\\
    &=0.
\end{align*}
Since $G(\dot A)\geq0$, it follows that $G(\dot A)=0$ almost surely, and therefore $P_0(S)=0$.

Now suppose the policy binds for a field $f$, so that $\rho(E_f)>0$. By definition of $E_f$, for every $t\in E_f$ we have $f(t)\in J_t$, and hence the flattening property gives $\Phi(f)(t)=v(t)$. Therefore
\begin{equation*}
    G\{\Phi(f)\}
    \geq
    \rho(E_f)
    >0,
\end{equation*}
so $\Phi(f)\in S$. Thus
\begin{equation*}
    \{f:\rho(E_f)>0\}
    \subseteq
    \Phi^{-1}(S).
\end{equation*}
Using assumption~\textup{(a)},
\begin{align*}
    \Phi_{\#}P_0(S)
    &=
    P_0\{\Phi(\dot A)\in S\}\\
    &\geq
    P_0\{\rho(E_{\dot A})>0\}\\
    &>0.
\end{align*}
Thus $S$ is a $P_0$-null set to which the transformed law assigns
positive probability. Consequently $\Phi_{\#}P_0\not\ll P_0$.
\end{proof}

\begin{proof}[Proposition~\ref{prop:policy_model_agreement}]
Write $e=m_1-m_2$. For a Cameron--Martin policy, the
Radon--Nikodym identity gives
\[
    \Theta_{\nu}^{(1)}-\Theta_{\nu}^{(2)}
    =\mathbb E_{P_0}\{(\omega_h-1)e(\dot A)\}.
\]
By Theorem~\ref{thm:admissible_class},
\[
    \mathbb E_{P_0}\{(\omega_h-1)^2\}
    =\exp\bigl(\|h\|_{\mathbb H_\Sigma}^{2}\bigr)-1.
\]
Cauchy--Schwarz proves \eqref{eq:policy_model_agreement}.
Taking $m_1=\omega_h$ and $m_2=1$ shows sharpness when $h\neq0$;
the case $h=0$ is immediate.

Now suppose $\pi\not\ll P_0$. By regularity of Borel probability
measures on the separable Hilbert space $\mathbb X$, there is a
compact set $K\subseteq\mathbb X$ with $P_0(K)=0$ and $\pi(K)>0$.
Define
\[
    e_n(f)=\bigl\{1-n\,\operatorname{dist}_{\mathbb X}(f,K)\bigr\}_{+}.
\]
Each $e_n$ is continuous, takes values in $[0,1]$, and converges
pointwise to $\mathbf1_K$. Dominated convergence gives
\[
    \mathbb E_{P_0}\{e_n(\dot A)^2\}\longrightarrow0,
    \qquad
    \int_{\mathbb X}e_n(f)\,(\pi-P_0)(\dd f)
    \longrightarrow\pi(K)>0.
\]
Set $c=\pi(K)/2$. For any $\epsilon>0$, taking $m_1=e_n$ and
$m_2=0$ for sufficiently large $n$ proves the second assertion.
\end{proof}

\begin{proof}[Proposition~\ref{prop:diagnostic_futility}]
Let $\mathcal F_m=\sigma(\Pi_m)$. Since the representations are nested, $\mathcal F_m\subseteq\mathcal F_{m+1}$. The Radon-Nikodym derivatives of the restrictions of $Q$ to these $\sigma$-fields form a non-negative $P$-martingale. In particular,
\begin{equation}
    \omega^{(m)}
    =
    \mathbb E_P\{\omega^{(m+1)}\mid\mathcal F_m\}.
    \label{eq:weight_martingale}
\end{equation}
Conditional Jensen's inequality therefore gives
\begin{align*}
    \mathbb E_P\{(\omega^{(m)})^2\}
    &=
    \mathbb E_P\left[
        \left\{
        \mathbb E_P(\omega^{(m+1)}\mid\mathcal F_m)
        \right\}^2
    \right]\\
    &\leq
    \mathbb E_P\{(\omega^{(m+1)})^2\},
\end{align*}
which proves~\textup{(i)}.

Now suppose $Q\ll P$ and write $\omega=\frac{\dd Q}{\dd P}$, then
$ \omega^{(m)} =\mathbb E_P(\omega\mid\mathcal F_m)$. If $\omega\in L_2(P)$, the $L_2$ martingale convergence theorem and the fact that $\mathcal F_m$ increases to the Borel $\sigma$-field give
\begin{equation*}
    \omega^{(m)}
    \longrightarrow
    \omega
    \qquad\text{in }L_2(P).
\end{equation*}
Hence
\begin{equation*}
    \chi_m^2
    \longrightarrow
    \mathbb E_P(\omega^2)-1
    =
    \chi^2(Q\,\|\,P),
\end{equation*}
which proves~\textup{(ii)}.

It remains to consider the complementary cases. Suppose, contrary to the claim, that
\begin{equation*}
    \sup_m
    \mathbb E_P\{(\omega^{(m)})^2\}<\infty.
\end{equation*}
Then $\{\omega^{(m)}\}$ is an $L_2$-bounded martingale and therefore converges in $L_2(P)$ to some $\omega^{(\infty)}\in L_2(P)$. For every event $A\in\mathcal F_m$,
\begin{equation*}
    Q(A)
    =
    \mathbb E_P\{\omega^{(m)}\mathbf 1_A\}
    =
    \mathbb E_P\{\omega^{(\infty)}\mathbf 1_A\}.
\end{equation*}
Since $\bigcup_m\mathcal F_m$ generates the Borel $\sigma$-field, this identity extends to every Borel event. Thus
\begin{equation*}
    Q\ll P,
    \qquad
    \frac{\dd Q}{\dd P}
    =
    \omega^{(\infty)}
    \in L_2(P).
\end{equation*}
Therefore, if either $Q\not\ll P$ or its Radon--Nikodym derivative is not square-integrable, the sequence cannot be bounded. By part~\textup{(i)} it is non-decreasing, and hence
\begin{equation*}
    \chi_m^2\longrightarrow+\infty.
\end{equation*}
This proves~\textup{(iii)}.
\end{proof}

\begin{proof}[Proposition~\ref{prop:plugin_exposure}]
Part~\textup{(i)} follows from linearity of conditional expectation. For part~\textup{(ii)}, conditional Jensen's inequality applies to the convex function $u\mapsto\{u-c(t)\}_+$. If $\widehat a(t)\leq c(t)$, the gap is positive whenever there is positive conditional probability above the cap. If $\widehat a(t)>c(t)$, the gap equals $\mathbb E[\{c(t)-\dot A(t)\}_+\mid\mathcal O]$ and is positive whenever there is positive conditional probability below the cap. At the threshold, the Gaussian gap is $s(t)\mathbb E(G_+)=s(t)/\sqrt{2\pi}$ for $G\sim N(0,1)$.
\end{proof}

\begin{proof}[Equation~\eqref{eq:rescaling_weight}]
For part~\textup{(i)}, the finite-dimensional Gaussian
change-of-measure formula gives
\[
    \mathbb E_{g_0}(\omega_p^2)
      =\exp(h_p^\top K_p^{-1}h_p).
\]
If $h\in\mathbb H_\Sigma$, the continuum translation has a
square-integrable weight with second moment
$\exp(\|h\|_{\mathbb H_\Sigma}^{2})$.
If $h\notin\mathbb H_\Sigma$, its law is singular with respect
to the observational law. The monotonicity and the two limits
therefore follow from
Proposition~\ref{prop:diagnostic_futility}, applied on the
Gaussianising scale.

For part~\textup{(ii)}, let $p_0$ and $p_1$ be the densities of
$N_p(0,K_p)$ and $N_p(0,r^2K_p)$. When $0<r<\sqrt2$,
Gaussian integration gives
\begin{align*}
    \int_{\mathbb R^p}\frac{p_1(z)^2}{p_0(z)}\,\dd z
    &=
    |r^2K_p|^{-1}|K_p|^{1/2}
    \bigl|(2r^{-2}-1)K_p^{-1}\bigr|^{-1/2}\\
    &=(2r^2-r^4)^{-p/2}.
\end{align*}
For $r\geq\sqrt2$, the integral diverges.
Finally,
$2r^2-r^4=1-(r^2-1)^2\in(0,1)$ when
$0<r<\sqrt2$ and $r\neq1$, proving the geometric growth.
\end{proof}

\end{document}